\documentclass[sigconf,nonacm]{acmart}

\usepackage{booktabs} 
\usepackage{graphicx}
\usepackage{latexsym}
\usepackage{amsmath}
\usepackage{amssymb}
\usepackage{url}
\usepackage{float}
\usepackage{tikz}
\usepackage{fancyvrb}
\usepackage{algpseudocode}
\usepackage{listings}
\usepackage{xcolor}
\usepackage{color}
\usepackage{algorithm}
\usepackage{soul}
\usepackage{multirow}
\colorlet{TufteRed}{red!80!black}

\newcommand{\hide}[1]{}
\newcommand{\xhdr}[1]{\vspace{0.5mm}\noindent{\textbf{#1.}}\hspace{0.5mm}}

\newcommand{\UsizeC}{U(\lvert C \rvert)}

\usepackage[capitalize]{cleveref}

\newtheorem{theorem}{Theorem}
\newtheorem{coro}[theorem]{Corollary}

\newtheorem{lemma}[theorem]{Lemma}

\newtheorem{definition}[theorem]{Definition}



\setcopyright{rightsretained}

\usepackage[skip=0pt]{caption}
\setlength{\textfloatsep}{4pt}

\acmDOI{}

\acmISBN{}

\acmConference[]{}{}{}
\acmYear{}
\copyrightyear{}

\acmArticle{}
\acmPrice{}

\begin{document}
\title{Planted Hitting Set Recovery in Hypergraphs}

\author{Ilya Amburg}
\orcid{0003-1632-5427}
\affiliation{%
  \institution{Cornell University}
  \city{Ithaca}
  \state{New York}
  \postcode{14850}
}
\email{ia244@cornell.edu}

\author{Jon Kleinberg}
\affiliation{%
  \institution{Cornell University}
  \city{Ithaca}
  \state{New York}
  \postcode{14850}
}
\email{kleinber@cs.cornell.edu}

\author{Austin R.~Benson}
\orcid{0001-6110-1583}
\affiliation{%
  \institution{Cornell University}
  \city{Ithaca}
  \state{New York}
  \postcode{14850}
}
\email{arb@cs.cornell.edu}

%
%


\begin{abstract}
In various application areas, networked data is collected by measuring
interactions involving some specific set of {\em core} nodes.  This results in a
network dataset containing the core nodes along with a potentially much larger
set of fringe nodes that all have at least one interaction with a core
node. 
In many settings, this type of data arises for structures that are richer than graphs, because they involve the interactions of larger sets; 
for example, the core nodes might be a set of individuals under
surveillance, where we observe the attendees of meetings involving at least one of
the core individuals.
We model such scenarios using {\em hypergraphs}, and we
study the problem of core recovery: if we observe the
hypergraph but not the labels of core and fringe nodes, can we recover the
``planted'' set of core nodes in the hypergraph?

We provide a theoretical framework for analyzing the recovery of such a
set of core nodes and use our theory to develop a practical and scalable
algorithm for core recovery. The crux of our analysis and algorithm is that the core
nodes are a \emph{hitting set} of the hypergraph, meaning that every hyperedge
has at least one node in the set of core nodes.  We demonstrate the efficacy of
our algorithm on a number of real-world datasets, outperforming competitive
baselines derived from network centrality and core-periphery measures.
\end{abstract}

\maketitle


\section{Introduction}
The data we can collect in practice is typically incomplete in several fundamental and recurring
ways, and this is particularly true for graph and network data modeling complex
systems throughout the social and biological
sciences~\cite{Laumann-1989-boundary,Kuny-1997-dark,Kossinets-2006-missing,Gile-2010-RDS,Guimer-2009-missing}.
One common type of incompleteness arises from a particular type of data
collection procedure where one wants to record all interactions involving a set
of core nodes $C$. This results in what is sometimes called a ``core-fringe''
structure~\cite{Benson-2018-planted}, where the resulting network structure
contains the core nodes along with a (potentially) much larger set of fringe
nodes $F$ that are observed in some interaction with a node in $C$. For example,
in survey data, $C$ might represent a set of respondents, and $F$ a set of
individuals with whom they interact; this is a common result in snowball and
respondent-driven
sampling~\cite{Heckathorn-2011-snowball,Gile-2010-RDS,Goel-2010-RDS}.
Another scenario arises in restrictions on surveillance.  For example, the
Enron~\cite{Klimt-2004-Enron} and
Avocado\footnote{https://catalog.ldc.upenn.edu/LDC2015T03} email datasets are
common data sources for email and network research; both datasets have a core
$C$ comprised of employees of a company and a fringe $F$ of people outside the
company that come from emails sent from or received by members of $C$.  

\begin{figure}[tb]
\centering
\scalebox{0.7}{\tikzset{every picture/.style={line width=0.75pt}} 

\begin{tikzpicture}[x=0.75pt,y=0.75pt,yscale=-1,xscale=1]

\draw  [fill={rgb, 255:red, 122; green, 188; blue, 162 }  ,fill opacity=0.21 ] (413.32,89.43) .. controls (406.82,98.36) and (388.88,96.4) .. (373.24,85.04) .. controls (357.61,73.67) and (350.19,57.22) .. (356.68,48.28) .. controls (363.18,39.35) and (381.12,41.31) .. (396.76,52.67) .. controls (412.39,64.04) and (419.81,80.49) .. (413.32,89.43) -- cycle ;
\draw  [fill={rgb, 255:red, 138; green, 158; blue, 205 }  ,fill opacity=0.21 ] (402.16,94.11) .. controls (391.66,90.69) and (387.98,73.02) .. (393.96,54.64) .. controls (399.93,36.26) and (413.29,24.12) .. (423.79,27.53) .. controls (434.3,30.95) and (437.97,48.62) .. (432,67) .. controls (426.03,85.38) and (412.67,97.52) .. (402.16,94.11) -- cycle ;
\draw  [fill={rgb, 255:red, 197; green, 180; blue, 180 }  ,fill opacity=0.2 ] (395,84) .. controls (395,72.95) and (410.67,64) .. (430,64) .. controls (449.33,64) and (465,72.95) .. (465,84) .. controls (465,95.05) and (449.33,104) .. (430,104) .. controls (410.67,104) and (395,95.05) .. (395,84) -- cycle ;
\draw  [fill={rgb, 255:red, 148; green, 136; blue, 112 }  ,fill opacity=0.21 ] (413.32,77.43) .. controls (419.81,86.36) and (412.39,102.82) .. (396.76,114.18) .. controls (381.12,125.54) and (363.18,127.51) .. (356.68,118.57) .. controls (350.19,109.64) and (357.61,93.18) .. (373.24,81.82) .. controls (388.88,70.46) and (406.82,68.49) .. (413.32,77.43) -- cycle ;
\draw  [fill={rgb, 255:red, 139; green, 87; blue, 42 }  ,fill opacity=0.21 ] (401.18,73.71) .. controls (411.69,70.3) and (425.05,82.44) .. (431.02,100.82) .. controls (436.99,119.2) and (433.32,136.87) .. (422.82,140.29) .. controls (412.31,143.7) and (398.95,131.56) .. (392.98,113.18) .. controls (387.01,94.8) and (390.68,77.13) .. (401.18,73.71) -- cycle ;
\draw  [color={rgb, 255:red, 0; green, 0; blue, 0 }  ][line width=4] [line join = round][line cap = round] (398.3,102.2) .. controls (398.3,102.2) and (398.3,102.2) .. (398.3,102.2) ;
\draw  [color={rgb, 255:red, 0; green, 0; blue, 0 }  ][line width=4] [line join = round][line cap = round] (405.3,84.2) .. controls (405.3,84.2) and (405.3,84.2) .. (405.3,84.2) ;
\draw  [color={rgb, 255:red, 0; green, 0; blue, 0 }  ][line width=4] [line join = round][line cap = round] (421.3,96.2) .. controls (421.3,96.2) and (421.3,96.2) .. (421.3,96.2) ;
\draw  [color={rgb, 255:red, 0; green, 0; blue, 0 }  ][line width=4] [line join = round][line cap = round] (399.3,64.2) .. controls (399.3,64.2) and (399.3,64.2) .. (399.3,64.2) ;
\draw  [color={rgb, 255:red, 0; green, 0; blue, 0 }  ][line width=4] [line join = round][line cap = round] (383.3,83.2) .. controls (383.3,83.2) and (383.3,83.2) .. (383.3,83.2) ;
\draw  [color={rgb, 255:red, 0; green, 0; blue, 0 }  ][line width=4] [line join = round][line cap = round] (422.3,72.2) .. controls (422.3,72.2) and (422.3,72.2) .. (422.3,72.2) ;
\draw  [fill={rgb, 255:red, 122; green, 188; blue, 162 }  ,fill opacity=0.21 ] (413.32,206.43) .. controls (406.82,215.36) and (388.88,213.4) .. (373.24,202.04) .. controls (357.61,190.67) and (350.19,174.22) .. (356.68,165.28) .. controls (363.18,156.35) and (381.12,158.31) .. (396.76,169.67) .. controls (412.39,181.04) and (419.81,197.49) .. (413.32,206.43) -- cycle ;
\draw  [fill={rgb, 255:red, 138; green, 158; blue, 205 }  ,fill opacity=0.21 ] (402.16,211.11) .. controls (391.66,207.69) and (387.98,190.02) .. (393.96,171.64) .. controls (399.93,153.26) and (413.29,141.12) .. (423.79,144.53) .. controls (434.3,147.95) and (437.97,165.62) .. (432,184) .. controls (426.03,202.38) and (412.67,214.52) .. (402.16,211.11) -- cycle ;
\draw  [fill={rgb, 255:red, 197; green, 180; blue, 180 }  ,fill opacity=0.2 ] (395,201) .. controls (395,189.95) and (410.67,181) .. (430,181) .. controls (449.33,181) and (465,189.95) .. (465,201) .. controls (465,212.05) and (449.33,221) .. (430,221) .. controls (410.67,221) and (395,212.05) .. (395,201) -- cycle ;
\draw  [fill={rgb, 255:red, 148; green, 136; blue, 112 }  ,fill opacity=0.21 ] (413.32,194.43) .. controls (419.81,203.36) and (412.39,219.82) .. (396.76,231.18) .. controls (381.12,242.54) and (363.18,244.51) .. (356.68,235.57) .. controls (350.19,226.64) and (357.61,210.18) .. (373.24,198.82) .. controls (388.88,187.46) and (406.82,185.49) .. (413.32,194.43) -- cycle ;
\draw  [fill={rgb, 255:red, 139; green, 87; blue, 42 }  ,fill opacity=0.21 ] (401.18,190.71) .. controls (411.69,187.3) and (425.05,199.44) .. (431.02,217.82) .. controls (436.99,236.2) and (433.32,253.87) .. (422.82,257.29) .. controls (412.31,260.7) and (398.95,248.56) .. (392.98,230.18) .. controls (387.01,211.8) and (390.68,194.13) .. (401.18,190.71) -- cycle ;
\draw  [color={rgb, 255:red, 0; green, 0; blue, 0 }  ][line width=4] [line join = round][line cap = round] (405.3,201.2) .. controls (405.3,201.2) and (405.3,201.2) .. (405.3,201.2) ;

\end{tikzpicture}}
\scalebox{0.9}{\input{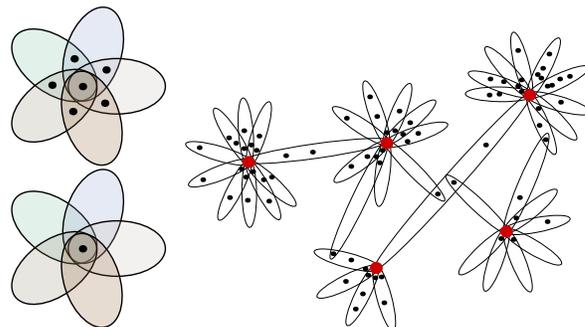}}
\caption{Hypergraphs and planted hitting sets.
{\textbf (Left)} A 3-uniform hypergraph on the top and a minimal hitting set on the bottom
(with hyperedges superimposed to facilitate comparison to the hypergraph).
{\bf (Right)} A planted hitting set (red nodes) in a hypergraph, constituting the \emph{core} nodes. 
Hyperedges are illustrated by ellipses encompassing nodes, which include both core and non-core (\emph{fringe}) nodes.
We study how well we can recover the core (red) nodes if they are not identified.
}
\label{fig:hypergraph_hs}
\end{figure}

One can imagine similar scenarios in a variety of settings. In intelligence
data, one might record all of the attendees at meetings that involve at least
one of a set of individuals under surveillance. In a similar manner,
telecommunications providers can observe group text messages where only some of
the members of the group are subscribers. Furthermore, measurements of Internet
structure from service providers at the IP-layer from
individual service providers provide only a partial
measurement~\cite{Tsiatas-2013-spectral}, and large-scale Web crawls do not
encapsulate the entire network structure of
hyperlinks~\cite{Boldi-2004-UbiCrawler,Boldi-2014-BUbiNG}.

While a dataset may be equipped with labels of which nodes are members of the
core and which are members of the fringe, there are several situations in which
such labels are unavailable. For example, consider the email datasets described
above. Such data could be released by a hacker from outside the organization, or a leaker from inside it, who has collected
email from a set of core accounts but not released the identity of the
accounts~\cite{Hier-2009-surveillance}. Similarly, in the case of intelligence
data that records attendees of meetings, the data could be released with the
identities of the individuals under surveillance redacted. In other cases, the
metadata may be lost simply to issues of data maintenance, and such concerns are
central to the research areas of data provenance and preservation
\cite{Buneman-2001-provenance,Lynch-2008-data,Simmhan-2005-survey,Tan-2004-provenance}.

The question of recovering the identity of the core nodes in a dataset with core-fringe structure is therefore a fundamental question arising in different applications, and for multiple different reasons.

\xhdr{Core-Fringe Recovery in Hypergraphs}
Here, we study the problem of identifying the set of core nodes, when the labels
of which nodes are core and which are fringe are unavailable and the data is
represented by a {\em hypergraph}---that is, a set of nodes along with a set of {\em hyperedges}
that each connect multiple nodes. \Cref{fig:hypergraph_hs}
illustrates this model. The hypergraph model is apt for many of the scenarios
described above, including emails, meeting attendees, and group messages. While
the hypergraph model is richer than a more standard graph model, the
higher-order structure also presents challenges. If a hyperedge connects a
(latent) core node and several fringe nodes, a core recovery algorithm will have
to consider all nodes in the hyperedge as possible candidates for the core.

We first provide theoretical results that motivate the development
of an algorithm for recovery of the core. The key structural property of our
analysis is that the core nodes constitute a \emph{hitting set}, meaning every
hyperedge in the hypergraph has at least one node in the core
(\cref{fig:hypergraph_hs}). This property comes from how the data is
measured---every included hyperedge comes from an interaction
of a member of the core with some other set of nodes.
We thus think of the core as a ``planted'' hitting set that we must find in the data.
Our first theoretical results 
rely on two assumptions (both relaxable, to some extent): the planted hitting
set corresponding to the core is
(i) minimal, that is, no node could be removed to make the hitting set smaller; and
(ii) relatively small, i.e., bounded by a constant $k$.
We analyze how large the \emph{union of all minimal hitting sets}, denoted $U(k)$,
can be.  We build upon results in fixed-parameter tractability~\cite{Damaschke-2006-parameterized,Damaschke-2009-unions}
to show that $\lvert U(k) \rvert$ is $\Theta(k^r)$
in the worst case, where $r$ is the size of the largest hyperedge; importantly,
this bound is independent of the hypergraph's size, and $U(k)$ is
guaranteed to contain our planted hitting set if it meets the modeling
assumptions.

Furthermore, we prove that a classical greedy approximation algorithm for set
cover relates to partial recovery of the planted hitting set. In a hypergraph
where $r$ is the size of the largest hyperedge, we show that the output of the
algorithm must overlap with the planted hitting set by at least a 
$O(1 / r)$ fraction of the nodes, provided that the hitting set size is within a constant
factor of the minimum hitting set size.

Combining these two main results leads to our algorithm, which we call the union
of minimal hitting sets (UMHS). This algorithm is practical and
scalable, using only simple subroutines and having computational complexity
roughly linear in the size of the data. The idea is to run the greedy algorithm
multiple times with random initialization to find several minimal hitting sets
and then to simply take the union of these hitting sets. The first part of our
theory says that the output should not grow too large, and the second part of
our theory says the output should also overlap the planted hitting set.  We show
that our method consistently outperforms a number of baselines derived from
network centrality and core-periphery structure for hypergraphs on
several real-world datasets.

Our approach to the planted hitting set problem is purely combinatorial.
This contrasts with typical planted problems such as the stochastic block model 
and planted clique recovery, which are based on probabilistic structure.
Thus, the methods we use are fundamentally
different than what is common for these types of problems. Moreover, we show
that if random block-model-type structure exists for our problem, then
the problem becomes substantially easier.


\section{Problem setup and theoretical results for core recovery}

We start with some formal definitions to facilitate our presentation of the
theoretical results. A hypergraph $G$ consists of a set of nodes $V$ and a set
of hyperedges $E$, each a subset of $V$ of size at least two. The \emph{rank} $r$ of a hypergraph
is the maximum size of any hyperedge, i.e., $r = \max_{e \in E} \lvert e \rvert$. A hypergraph
is called \emph{$r$-uniform} if all hyperedges have cardinality equal to $r$.

In our setup, there is some unidentified subset $C \subseteq V$ that is
designated as the set of ``core nodes.'' Our goal is to find $C$, knowing that
$C$ is a hitting set of $G$ (i.e., for every $e \in E$, there is some $v \in e \cap C$).
We say that $C$ is \emph{planted} since we do not know its identity.
Absent any additional information, we are helpless---it could be that $C =
V$. However, we can do better by assuming two properties of $C$ (both
relaxable):
(i) $C$ is a minimal hitting set in $G$ and (ii)
$\lvert C \rvert \le k$.
Under these constraints, it is certainly true that $C$ is contained in the union
of all minimal hitting sets, which is a principle object of our analysis:
\begin{definition}\label{def:Uk}
Given a hypergraph $G$, $U(k)$ is the union of all minimal hitting sets of $G$
of size at most $k$.
\end{definition}

We next show bounds on $\lvert U(k) \rvert$ that are independent of the size of
the graph. This says that $C$ is contained in a relatively small set if it
satisfies the conditions. Furthermore, we can find $U(k)$ in time exponential in
$k$ but polynomial in the size of the graph; however, we develop more practical
algorithms in \cref{sec:UMHS}. After, we relax the assumptions on minimality and get
results on \emph{partial recovery}, i.e., algorithms that are guaranteed to find a
part of $C$. Finally, we show how these results improve under a random hypergraph model.

\subsection{Minimal hitting sets}\label{sec:minimal}

First let us suppose that we are given $G$ and asked to find $C$, where we know
that $C$ is minimal and $\lvert C \rvert \leq k$. We ask the following question: is
it possible to find a small set $H\subset G$, whose size is independent of $G$,
such that $C \subset H$? In this section, we answer the question in the
affirmative.

We at least know that the union of \emph{all} minimal hitting sets of size at
most $k$, $U(k)$ (\cref{def:Uk}), \emph{must} contain $C$. But how large can
this set be? The following result says that $U(k)$ actually can't be that large.
\begin{lemma}[\cite{Damaschke-2009-unions,Damaschke-2006-parameterized}]\label{lem:UMHS}
  In a hypergraph of rank $r$, $U(k)$ has size $\Theta(k^r)$ in the worst case.
\end{lemma}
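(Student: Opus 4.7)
The statement bundles two claims---an upper bound $|U(k)| = O(k^r)$ that must hold on every rank-$r$ hypergraph, together with a matching worst-case construction $|U(k)| = \Omega(k^r)$---so the plan is to attack them separately.

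For the upper bound I would follow the sunflower-lemma kernelization argument that underlies the classical $O(k^r)$ kernel for $r$-Hitting Set. The crucial observation is that every $v \in U(k)$ lies in some minimal hitting set $H$ with $|H| \le k$, and minimality provides $v$ with a \emph{private} hyperedge $e_v^H$ satisfying $e_v^H \cap H = \{v\}$. I would then apply the Erdős--Rado sunflower lemma to the hyperedge family of $G$: whenever it contains a sunflower $\{e_1, \ldots, e_{k+1}\}$ with core $Y$, every hitting set of size at most $k$ is forced to meet $Y$---otherwise each of the $k+1$ pairwise-disjoint petals $e_i \setminus Y$ would need its own representative, forcing $|H| \ge k+1$. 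Iteratively deleting the extra petals (while remembering the implicit constraint ``hit $Y$'') terminates in a subfamily with at most $r!\,k^r$ edges, by the sunflower bound. The vertices incident to these remaining edges form a set of size $O(k^r)$ that must contain $U(k)$.

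For the lower bound I would exhibit an explicit rank-$r$ hypergraph $G^{\ast}$ with $|U(k)| = \Omega(k^r)$. The construction of \cite{Damaschke-2006-parameterized,Damaschke-2009-unions} is essentially sunflower-tight: one arranges $\Theta(k^r)$ vertices and a carefully chosen edge family so that a large collection of size-$k$ minimal hitting sets collectively covers every vertex. Intuitively, one groups the vertices into $r$ blocks of size $\Theta(k)$ and instantiates enough edges to force each combinatorial ``transversal'' to appear as a minimal hitting set of size $\le k$, with the transversals together using every vertex.

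The main obstacle is \emph{tightness on both sides}. A naive bounded-search-tree count on the upper bound only yields $|U(k)| = O(r^k)$, which is exponential in $k$ and useless when $k$ is the growing parameter; the sunflower mechanism is what turns the exponent around from $r^k$ into $k^r$, and the accounting must be done carefully so that each reduction step neither drops relevant vertices of $U(k)$ nor wastes a factor of $k$. On the lower-bound side, the delicate point is verifying \emph{minimality} pointwise: each of the $\Omega(k^r)$ vertices must genuinely be irremovable from some hitting set of size $\le k$, rather than merely belonging to some (non-minimal) hitting set of that size.
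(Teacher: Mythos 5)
Your upper-bound argument is essentially the paper's own proof. Both proceed by applying the Erd\H{o}s--Rado sunflower lemma to repeatedly replace the petals of a $(k+1)$-sunflower by its (necessarily non-empty) core, observing that this operation preserves the family of hitting sets of size at most $k$ --- hence the minimal ones, hence $U(k)$ --- and stopping once at most $\sigma(r,k+1)=r!\,k^r$ hyperedges remain, so that the at most $r\cdot r!\,k^r$ non-isolated vertices contain $U(k)$. The accounting point you flag (that no reduction step drops a vertex of $U(k)$) is exactly what the paper's Property (2) establishes, and your ``private hyperedge'' observation is the right way to see that a minimal hitting set of the reduced instance cannot use an isolated vertex. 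No issues on that side.

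The lower bound, however, has a genuine gap. The structure you describe --- $r$ blocks of size $\Theta(k)$ with edges forcing each transversal to be a minimal hitting set --- cannot give the bound: $r$ blocks of size $\Theta(k)$ contain only $\Theta(rk)$ vertices in total, so even if every vertex belonged to some minimal hitting set of size at most $k$ you would conclude only $\lvert U(k)\rvert = O(rk) = O(k)$ for constant $r$, not $\Omega(k^r)$; this is also inconsistent with your own stated vertex count of $\Theta(k^r)$. The construction the paper actually uses is different in kind: take $b$ disjoint complete $b$-ary trees of depth $r$ and let the hyperedges be the root-to-leaf paths, so the hypergraph has more than $b^r$ vertices. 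For each ``consistent labeling'' $\lambda$ assigning labels $\{1,\dots,b\}$ bijectively to the roots and to the children of every internal node, the set consisting of every vertex that is the first non-$b$-labeled vertex on its root-to-leaf path, together with the unique all-$b$ leaf, is a \emph{minimal} hitting set of size $k=(r-1)(b-1)+b$, and every vertex of the hypergraph lies in such a set for a suitable choice of $\lambda$. Hence $U(k)$ is the entire vertex set and $\lvert U(k)\rvert > b^r \ge (k/r)^r$. Your instinct that the delicate step is certifying minimality pointwise is correct, but without an explicit construction in which $\Theta(k^r)$ distinct vertices each receive such a certificate, the $\Omega(k^r)$ half of the lemma remains unproved.
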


Damaschke and Molokov established the upper bound
$\lvert U(k) \rvert = O(k^r)$
using sophisticated techniques from parameterized complexity theory~\cite{Damaschke-2009-unions}.
Here, we prove the same upper bound with a simpler, self-contained
proof that uses the celebrated sunflower lemma of Erd{\"o}s and Rado~\cite{Erdos-1960-intersection}.
The lower bound $\lvert U(k) \rvert = \Omega(k^r)$ was stated by Damashke~\cite{Damaschke-2006-parameterized}
with a brief proof sketch. We provide a full proof 
for completeness.
A consequence of these results is the following:
\begin{theorem}\label{thm:UMHS}
  Let $C$ be a planted minimal hitting set with $\lvert C \rvert \le k$
  in a hypergraph of rank $r$.
  Then we can find a set $D$ of size $O(k^r)$ that is guaranteed
  to contain $C$.
\end{theorem}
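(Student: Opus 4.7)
The plan is to simply take $D := U(k)$, the union of all minimal hitting sets of size at most $k$, and verify that this choice satisfies both required properties. The containment $C \subseteq D$ is immediate from \cref{def:Uk}: by hypothesis $C$ is itself a minimal hitting set with $|C| \le k$, so $C$ is one of the sets whose union defines $U(k)$, hence every element of $C$ lies in $D$. The size bound $|D| = O(k^r)$ is then exactly the upper bound half of \cref{lem:UMHS}, which we are free to invoke since it is stated earlier in the excerpt.

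The only remaining point to address is constructibility, since the theorem says ``we can find'' $D$. Here I would appeal to standard fixed-parameter tractability for the hitting set problem: a bounded search tree of depth $k$ (branching on which element of an uncovered hyperedge to place into the hitting set) enumerates every minimal hitting set of size at most $k$ in time $O(r^k \cdot |E|)$, after which one discards non-minimal candidates and takes the union. This suffices to produce $D$; the paper defers any more practical algorithm to \cref{sec:UMHS}, so no more needs to be said at this point.

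I expect no real obstacles: the lemma does the heavy lifting, and the rest is bookkeeping. The one place where I would be careful is to state the containment in the right direction — it is tempting to write ``$C$ is one of the sets contributing to $U(k)$,'' but what we actually need is the elementwise statement $v \in C \implies v \in U(k)$, which follows simply because the minimal hitting set $C$ itself witnesses that each $v \in C$ belongs to \emph{some} minimal hitting set of size at most $k$. With that observation the argument is complete in essentially three sentences.
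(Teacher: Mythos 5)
Your proposal is correct and matches the paper's own argument: the paper likewise takes $D = U(k)$, obtains the containment $C \subseteq U(k)$ directly from \cref{def:Uk}, and gets the size bound from the upper-bound half of \cref{lem:UMHS} (proved self-containedly as \cref{lem:upper} via the sunflower lemma). Your added remark on constructibility via a bounded search tree is consistent with the paper's passing comment that $U(k)$ can be found in time exponential in $k$ but polynomial in the size of the hypergraph.
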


We begin by providing the self-contained proof of the upper bound in
\cref{thm:UMHS}. Namely, let $G$ be a hypergraph of rank $r$, and let $U(k)$ be
the union of all minimal hitting sets in $G$ of size at most $k$. We will find a
function $g(r,k)=O(k^r)$ so that $\lvert U(k) \rvert \leq g(r, k)$.
To find such a function, we use a combinatorial structure called a
sunflower~\cite{Bollobas-1986-combinatorics}. A collection of $r$ sets
$\{X_1,X_2, . . . ., X_r\}$ is an $r$-sunflower if for all pairs of distinct
sets $X_i$, $X_j$, the pairwise intersection $X_i\cap X_j$ is equal to the
mutual intersection $\bigcap_{i=1}^{r}X_i$.  The term comes from the fact that
we can think of the $X_i$ as the ``petals" of the sunflower: they share a
``center" $\bigcap_{i=1}^{r}X_i$ and are otherwise pairwise disjoint from each
other. A famous theorem of Erd{\"o}s and Rado asserts that every sufficiently
large collection of sets must contain a large sunflower.

\begin{lemma}[Sunflower Lemma~\cite{Erdos-1960-intersection}]\label{lem:sunflower}
  Define the function $\sigma(r, k) = r!(k - 1)^r$. Any
  collection of more than $\sigma(r, k)$ sets, each of size at most $r$, must
  contain a $k$-sunflower.
\end{lemma}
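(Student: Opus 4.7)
The plan is to prove the Sunflower Lemma by induction on $r$, the bound on set size, with the function $\sigma(r,k) = r!(k-1)^r$ chosen precisely so the recursion goes through cleanly. For the base case $r = 1$, more than $\sigma(1,k) = k-1$ distinct sets of size at most $1$ means at least $k$ sets, each either empty or a singleton; since any two such sets intersect in the empty set, they themselves form a $k$-sunflower with empty core.

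For the inductive step, let $\mathcal{F}$ be a family with $|\mathcal{F}| > r!(k-1)^r$ sets, each of size at most $r$. First, I would greedily extract a maximal collection $\{Y_1, \ldots, Y_m\} \subseteq \mathcal{F}$ of pairwise disjoint sets. If $m \geq k$, these form a $k$-sunflower with empty core and the proof is complete. Otherwise $m \leq k - 1$, so the union $Y = \bigcup_{i=1}^m Y_i$ has size at most $r(k-1)$. By maximality every $X \in \mathcal{F}$ meets $Y$, so by pigeonhole some element $y \in Y$ lies in at least
\[
\frac{|\mathcal{F}|}{|Y|} > \frac{r!(k-1)^r}{r(k-1)} = (r-1)!(k-1)^{r-1} = \sigma(r-1,k)
\]
sets of $\mathcal{F}$. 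Collecting those sets and deleting $y$ from each produces a family $\mathcal{F}'$ of more than $\sigma(r-1,k)$ sets of size at most $r-1$. By the inductive hypothesis, $\mathcal{F}'$ contains a $k$-sunflower $\{Z_1, \ldots, Z_k\}$, and adding $y$ back to each $Z_i$ yields a $k$-sunflower $\{Z_1 \cup \{y\}, \ldots, Z_k \cup \{y\}\}$ in $\mathcal{F}$ with core $\bigl(\bigcap_i Z_i\bigr) \cup \{y\}$.

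The conceptual crux is the dichotomy at the heart of the argument: either many sets in $\mathcal{F}$ are already pairwise disjoint (giving an immediate sunflower with empty core), or they all crowd around some common element that can be factored out to reduce to a smaller problem. The technical obstacle is simply arranging the constants so this dichotomy terminates at the advertised bound; the precise form $\sigma(r,k) = r!(k-1)^r$ is dictated by the pigeonhole step above, where the factor of $r$ in $r!$ absorbs the size bound on $Y$ and the factor $(k-1)^r$ decrements to $(k-1)^{r-1}$ upon recursion. Everything else is bookkeeping.
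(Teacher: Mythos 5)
The paper does not actually prove this lemma---it is imported verbatim from Erd\H{o}s--Rado with a citation---so there is no in-paper proof to compare against. Your argument is the classical Erd\H{o}s--Rado induction on the set-size bound $r$ (extract a maximal pairwise-disjoint subfamily; if it is small, pigeonhole on a popular element $y$ and recurse after deleting $y$), and it is correct: the arithmetic $r!(k-1)^r/\bigl(r(k-1)\bigr)=(r-1)!(k-1)^{r-1}=\sigma(r-1,k)$ checks out, and deleting $y$ from distinct sets that all contain $y$ keeps them distinct, so the inductive hypothesis genuinely applies and re-inserting $y$ restores a $k$-sunflower. The only pedantic caveat is the degenerate case $\emptyset\in\mathcal{F}$, for which ``every set meets $Y$'' fails; this is conventionally excluded from the statement and is irrelevant here since all hyperedges in the paper have size at least two.
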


We use this result to find the desired function $g(r,k)=O(k^r)$.
Subsequently, we establish an asymptotically matching lower bound.

\begin{lemma}
\label{lem:upper}
Let $G$ be a hypergraph of rank $r$. Then $\lvert U(k) \rvert = O(k^r)$.
\end{lemma}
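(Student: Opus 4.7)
The plan is to prove $|U(k)| = O(k^r)$ by a kernelization argument driven by the Sunflower Lemma. I will iteratively contract $(k+1)$-sunflowers of hyperedges to their cores, show that this operation preserves the collection of minimal hitting sets of size at most $k$ (and hence $U(k)$), and then apply \cref{lem:sunflower} to the reduced hypergraph together with a simple incidence count.

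First, I would establish the reduction rule: given a $(k+1)$-sunflower $\{e_1, \ldots, e_{k+1}\}$ with core $Y = \bigcap_i e_i$, form $G'$ from $G$ by deleting the petals $e_1, \ldots, e_{k+1}$ and adding the single hyperedge $Y$. The crucial fact is that every hitting set $H$ of $G$ with $|H| \le k$ must meet $Y$: otherwise $H$ would have to pick a distinct vertex from each of the $k+1$ pairwise disjoint ``pure petals'' $e_i \setminus Y$, forcing $|H| \ge k+1$.

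Second, I would show that a set $H$ of size at most $k$ is a minimal hitting set of $G$ if and only if it is a minimal hitting set of $G'$. That $H$ remains a hitting set is immediate in one direction ($Y \subseteq e_i$) and follows from the observation above in the other. Minimality requires a more careful check: for each $v \in H$, one needs a witness edge whose only intersection with $H$ is $\{v\}$. When an original witness is a petal $e_j$, one argues that $v$ must lie in the core $Y$ (because $\emptyset \ne Y \cap H \subseteq e_j \cap H = \{v\}$), so $Y$ serves as the new witness. Conversely, when the new witness in $G'$ is $Y$ itself, then $Y \cap H = \{v\}$, so $H \setminus \{v\}$ has only $k-1$ vertices available to cover $k+1$ disjoint pure petals; by pigeonhole at least two pure petals are untouched by $H \setminus \{v\}$, and any such $e_{j}$ satisfies $e_{j} \cap H = \{v\}$ and acts as a $G$-witness.

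Iterating the reduction strictly shrinks the hyperedge count (each step nets $-k$ edges), so it terminates in a hypergraph $G^\star$ containing no $(k+1)$-sunflower. By \cref{lem:sunflower}, $G^\star$ has at most $\sigma(r, k+1) = r!\,k^r$ hyperedges, each of size at most $r$, so at most $r \cdot r!\,k^r$ distinct vertices appear in any hyperedge of $G^\star$. Since any vertex of $U(k)$ must participate in some hyperedge (it is the sole hitter for some edge in some minimal hitting set), $|U(k)| \le r \cdot r!\,k^r = O(k^r)$. The main obstacle I anticipate is the reverse direction of the minimality check in Step~2, where the pigeonhole slack afforded by having $k+1$ petals rather than $k$ is precisely what allows a surviving witness to be produced; with only $k$ petals the bookkeeping would fail and the reduction could enlarge the class of minimal hitting sets.
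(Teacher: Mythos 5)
Your proposal is correct and follows essentially the same sunflower-kernelization route as the paper's proof: repeatedly contract $(k+1)$-sunflowers to their (necessarily non-empty) cores, observe that hitting sets of size at most $k$ are preserved, and count the non-isolated vertices of the reduced hypergraph via the bound $\sigma(r,k+1)=r!\,k^r$ on its number of hyperedges. The only differences are cosmetic: the paper derives preservation of minimal hitting sets immediately from the equality of the collections of \emph{all} hitting sets of size at most $k$ (since every proper subset of such a set also has size at most $k$, your witness-edge bookkeeping is not needed), and it disposes up front of the degenerate case where $G$ has no hitting set of size at most $k$ --- the assumption that one exists is what guarantees each sunflower core is non-empty, a point your write-up leaves implicit.
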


\begin{proof}
We assume that $G$ has at least one hitting set of size at most $k$, as
otherwise the statement of \cref{lem:upper} holds with $U(k)$ equal to the
empty set.  We modify $G$ according to the following iterative procedure, which
operates in discrete phases.  We start by defining $H_0 = G$, and produce a
succession of hypergraphs $H_1, H_2, ...$, with $H_t$ the hypergraph at the end
of phase $t$. For each hypergraph $H_t$, let $C(H_t)$ denote the set of all
hitting sets of size at most $k$ in $H_t$, and let $\mathcal{C}^*(H_t)$
denote the set of all minimal hitting sets of size at most $k$ in $H_t$. We will
establish by induction that each hypergraph $H_t$ has at least one
hitting set of size at most $k$.

At the start of phase $t$, we ask whether $H_{t-1}$ has more than $\sigma(r, k + 1)$ 
hyperedges, where $\sigma$ is the function from \cref{lem:sunflower}. If it doesn't, we declare the procedure to be over. 
If it does, then we find a $(k + 1)$-sunflower in the set of hyperedges, consisting
of hyperedges $X_{t,1}, X_{t,2},\ldots, X_{t,k+1}$.

Now, the core of the sunflower found in phase $t$, which we will denote by
$Y_t = \bigcap_{i=1}^{k+1}X_{t,i}$ must be non-empty, for if it were empty, then
$H_{t-1}$ would contain $k + 1$ pairwise disjoint hyperedges, and hence could
not have a hitting of size at most $k$. 
We define $H_{t}$ as follows: starting from $H_{t-1}$, we remove
the hyperedges $X_{t,1}, X_{t,2}, \ldots, X_{t,k+1}$ and add the
hyperedge $Y_t$.
We observe the following properties.
\begin{enumerate}
\item Every hitting set in $H_{t-1}$ of size at most $k$ must intersect $Y_t$.
\item $C(H_{t-1}) = C(H_t)$, and hence $\mathcal{C}^*(H_{t-1})=\mathcal{C}^*(H_t)$.
\end{enumerate}
To see Property (1), 
By the definition of a sunflower, the sets $X_{t,i}\setminus Y_t$ are
pairwise disjoint for $i = 1, 2,\ldots, k+ 1$. Thus, if $C$ is a set of size
at most $k$ that doesn't contain any nodes of $Y_t$, then it must be disjoint
from at least one of the sets $X_{t,i}$, and hence it can't be a hitting set for
$H_{t-1}$.
The definition of $H_t$ is as follows: starting from $H_{t-1}$, we remove the hyperedges
$X_{t,1}, X_{t,2},\ldots, X_{t,k+1}$ and add the hyperedge $Y_t$. Since
$H_{t-1}$ and $H_{t}$ are hypergraphs on the same node sets, their respective
collections of hitting sets $C(H_{t-1})$ and $C(H_{t})$ are also over the same
sets of elements. 

Now, let $C\in \mathcal{C}(H_{t-1})$. Then by Property (1), it must intersect $Y_t$. Since
$Y_t$ is the only hyperedge in $H_t$ but not $H_{t-1}$, it follows that $C$ is a
hitting set for $H_t$, and hence $C\in \mathcal{C}(H_t)$. Conversely, let
$C'\in \mathcal{C}(H_t)$. Since $Y_t\subset X_{t,i}$ for $i = 1, 2,\ldots, k+1$,
it must be that $C_0$ intersects each $X_{t,i}$. Since
$X_{t,1}, X_{t,2},\ldots, X_{t,k+1}$ are the only hyperedges in $H_{t-1}$ but not $H_t$, it
follows that $C\in \mathcal{C}(H_{t-1})$.  Thus,
$\mathcal{C}(H_{t-1}) \subset \mathcal{C}(H_t) \subset \mathcal{C}(H_{t-1})$, and so they are equal as sets.

We apply the same process in each phase, producing $H_1$ from $G = H_0$, then
producing $H_2$ from $H_1$ and so forth. By induction using Property (2), since
$H_{t-1}$ contains at least one hitting set of size at most $k$, so does $H_t$,
as required. Since the procedure reduces the number of hyperedges in each phase,
it must eventually terminate, in some phase $t^*$. We write $H^* = H_{t^*}$ and
by applying Property (2) transitively,
\begin{equation}\label{eq:upper_3}
\mathcal{C}^*(H)=\mathcal{C}^*(H^*).
\end{equation}
We say that a node is isolated if it does not belong to any hyperedge. Let $T^*$
be the set of non-isolated nodes of $H^*$. Since the procedure stopped when
faced with $H^*$, $H^*$ had at most {$\sigma(r, k + 1)$}
hyperedges. Since each hyperedge has most $r$ elements, we have
\begin{equation}\label{eq:upper_4}
\lvert T^* \rvert \leq r\sigma(r, k + 1) = r \cdot r!k^r,
\end{equation}
and hence $\lvert T^* \rvert$ is bounded by $O(k^r)$.
Finally, recall that $U(k)$ is the union of all minimal hitting sets in $G$ of
size at most $k$. We can write $U(k)=\bigcup_{C\in\mathcal{C}^*(H)}C$, and by
\cref{eq:upper_3}, $U(k)=\bigcup_{C\in\mathcal{C}^*(H^*)}C$. No
minimal hitting set $C$ in $H^*$ can contain an isolated node $v$,
since then $C - \{v\}$ would also be a hitting set in $H^*$. From this,
$U(k) \subset T^*$, and $\lvert U(k) \rvert \le \lvert T^* \rvert = O(k^r)$
by \cref{eq:upper_4}.
\end{proof}

We now show asymptotic tightness for constant $r$, by constructing a family
of hypergraphs for which $\lvert U(k) \rvert = \Omega(k^r)$.
The following lemma combined with \cref{lem:upper} proves \cref{thm:UMHS}.

\begin{figure}[tb]
\centering
\scalebox{1.0}{\input{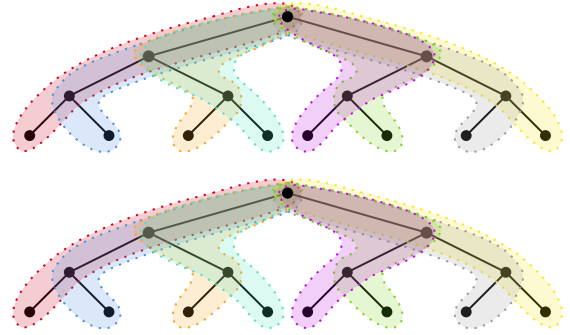}}
\caption{The hypergraph $T_2$ ($r=3$) used in the proof of \cref{lem:lower}
  with hyperedges identified by dotted shapes.}
\label{fig:trees}
\end{figure}

\begin{lemma}
\label{lem:lower}
For each constant $r \geq 2$, there exists an infinite family of rank-$r$
hypergraphs $G$ and parameters $k$, with both the number of nodes in $G$ and $k$
going to infinity, for which $\lvert U(k) \rvert = \Omega(k^r)$.
\end{lemma}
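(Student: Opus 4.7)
The plan is to exhibit, for each fixed $r \ge 2$ and each $m \ge 2$, a recursive rank-$r$ hypergraph $T_r(m)$ with $|V(T_r(m))| = \Theta(m^r)$ in which every vertex belongs to some minimal hitting set of size at most $rm$. Setting $k = rm$ then gives $|U(k)| \geq |V(T_r(m))| = \Theta(m^r) = \Omega(k^r)$, with the hidden constant depending only on $r$, and letting $m \to \infty$ produces the required infinite family.

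\textbf{Construction.} Let $T_1(m)$ consist of $m$ singleton hyperedges on $m$ distinct vertices. For $r \ge 2$, form $T_r(m)$ by taking $m$ vertex-disjoint copies $T_{r-1}^{(1)}, \ldots, T_{r-1}^{(m)}$ of $T_{r-1}(m)$, introducing $m$ fresh ``hub'' vertices $h_1, \ldots, h_m$, and, for each size-$(r-1)$ hyperedge $e$ in the $i$-th copy, adding the size-$r$ hyperedge $\{h_i\} \cup e$. A routine induction shows $T_r(m)$ is $r$-uniform with $|V(T_r(m))| = m^r + m^{r-1} + \cdots + m = \Theta(m^r)$.

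\textbf{Two families of minimal hitting sets.} First, the hub set $H = \{h_1, \ldots, h_m\}$ is minimal: every hyperedge contains exactly one hub, and deleting any $h_i$ leaves all hyperedges originating in the $i$-th copy uncovered, since these hyperedges meet no other hub and the copies are vertex-disjoint from $H$ and from each other. Second, for each index $i$ and each minimal hitting set $C_i$ of $T_{r-1}^{(i)}$ of size at most $(r-1)m$---which exists by the inductive hypothesis---the set $S_i = C_i \cup \{h_j : j \neq i\}$ has size at most $(r-1)m + (m-1) \le rm$ and is a hitting set of $T_r(m)$: hyperedges from copy $j \neq i$ are hit by $h_j$, and hyperedges $\{h_i\} \cup e$ from copy $i$ are hit by $C_i \cap e$.

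\textbf{Main obstacle: minimality of $S_i$.} This is the one step with real content and leans on disjointness of the copies. Removing $h_j$ for $j \neq i$ uncovers any hyperedge $\{h_j\} \cup e'$ from copy $j$, because $e' \subseteq V(T_{r-1}^{(j)})$ is disjoint from $C_i \subseteq V(T_{r-1}^{(i)})$ and the remaining hubs do not meet $e'$. Removing $v \in C_i$ invokes the inductive minimality of $C_i$: there is a witness hyperedge $e_v^{\ast} \in E(T_{r-1}^{(i)})$ with $e_v^{\ast} \cap C_i = \{v\}$, and the corresponding hyperedge $\{h_i\} \cup e_v^{\ast}$ of $T_r(m)$ becomes uncovered once $v$ is removed (since $h_i \notin S_i$ and the other hubs avoid $e_v^{\ast}$). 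Combining the two families, every hub lies in $H$ and every vertex $v$ of copy $i$ lies in $S_i$ for a suitable $C_i \ni v$ obtained by induction, so every vertex of $T_r(m)$ lies in $U(rm)$. With $k = rm$, this gives $|U(k)| = \Theta(m^r) = \Omega(k^r)$ for constant $r$, as required.
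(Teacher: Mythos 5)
Your proof is correct, and your hypergraph is the same one the paper uses: unrolling the recursion, $T_r(m)$ is exactly $m$ disjoint complete $m$-ary trees of depth $r$ with root-to-leaf paths as the hyperedges, and your minimal hitting sets $H$ and $S_i$ coincide with the sets $C^{*}_{\lambda}$ that the paper obtains from consistent labelings. The only difference is presentational---you certify minimality and the "every vertex is covered" claim by induction on $r$ rather than via explicit labelings---and the argument goes through (the singleton hyperedges in the auxiliary base case $T_1(m)$ are harmless, since the final hypergraphs for $r \ge 2$ are $r$-uniform).
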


\begin{proof}
Here we present the full proof of \cref{lem:lower}. This lemma was
stated without a complete proof by Damaschke~\cite[Proposition 9]{Damaschke-2006-parameterized}.

For a parameter $b$, consider a graph $\Gamma_b$ that consists of $b$ disjoint
complete $b$-ary trees of depth $r$. We will refer to the roots of the $b$ trees
in $\Gamma_b$ as the root nodes of $\Gamma_b$, and to the leaves of the trees in
$\Gamma_b$ as leaf nodes. Let $T_b$ be the hypergraph on the node set of
$\Gamma_b$ whose hyperedges consist of all root-to-leaf paths in $\Gamma_b$,
from the root of one of the $b$ trees in $\Gamma_b$ to a leaf of the
corresponding tree (the example of $T_2$ for $r=3$ is in \cref{fig:trees}). All
edges in $G$ have size $r$.  Let $\lambda$ be any function that maps nodes of
$\Gamma_b$ to numbers in $\{1, 2, . . . ., b\}$ with the following properties:
(i) $\lambda$ is a bijection from the roots of $\Gamma_b$ to $\{1, 2, \ldots, b\}$;
and (ii) for each non-leaf node $v$ in $\Gamma_b$, $\lambda$ is a
bijection from the children of $v$ to $\{1, 2, . . . ., b\}$. For each node $v$,
we will call $\lambda(v)$ its label (with respect to $\lambda$), and we will say
that $\lambda$ is a consistent labeling if it satisfies (i) and (ii).  Let
$v^{*}_{\lambda}$ be unique leaf node for which all the nodes on the
root-to-leaf path have the label $b$.

For a consistent labeling $\lambda$, let $C_\lambda$ be the set of all nodes $v$
satisfying two properties: (i) $\lambda(u) = b$ for all nodes $u$ on the unique
path from a root of $\Gamma_b$ to $v$ (other than $v$ itself), and (ii)
$\lambda(v) \neq b$. Let $C^{*}_{\lambda} = C_\lambda\cap\{v^{*}_{\lambda}\}$.
We observe the following two facts.
\begin{enumerate}
\item For every consistent labeling $\lambda$, the set $C^{*}_{\lambda}$ is a
  minimal hitting set for $T_b$.
\item Every node v of $T_b$ belongs to at least one set of the form
  $C^{*}_{\lambda}$ for some consistent labeling $\lambda$.
\end{enumerate}
To prove (1), we claim that $C^{*}_{\lambda}$ is a hitting set.
Consider any hyperedge $X$ of $T_b$, consisting of a root-to-leaf path $P$ in
$\Gamma_b$. If all nodes on $P$ have label $b$, then the leaf node of $P$ is
$v^{*}_{\lambda}$, and hence $X$ intersects $C^{*}_{\lambda}$. Otherwise,
consider the first node $v$ on the path $P$ that has a label unequal to
$b$. Thus, $v \in  C^{*}_{\lambda}$, and again $X$ intersects
$C^{*}_{\lambda}$ and $C^{*}_{\lambda}$ is a hitting set. We now argue that it is minimal.
If we delete $v^{*}_{\lambda}$ from $C^{*}_{\lambda}$, then
$C^{*}_{\lambda}\setminus \{v^{*}_{\lambda}\}$ would be disjoint from the
hyperedge $X$ consisting of the root-to-leaf path to $v^{*}_{\lambda}$. If we
delete some other $v^{*}_{\lambda}$, then a hyperedge $X$ that passes through
$v$ does not pass through any other node of $C^{*}_{\lambda}$ on its way from
root to leaf, and so $C^{*}_{\lambda}\setminus \{v\}$ would be disjoint from
$X$.

To prove (2), we simply choose any consistent labeling $\lambda$ such that
$\lambda(u) = b$ for all nodes $u$ on the unique path from a root of $\Gamma_b$
to $v$ (other than $v$ itself), and $\lambda(v) \neq b$.

Now, we consider the size of a set of the form $C^{*}_{\lambda}$. Since
$C^{*}_{\lambda}$ contains $b - 1$ nodes from each of the first $r - 1$ levels
of the trees in $\Gamma_b$, and $b$ from the lowest level, we have
\[
\lvert C^{*}_{\lambda} \rvert = (r - 1)(b - 1) + b.
\]

We define $k = (r - 1)(b - 1) + b$. Since each set $C^{*}_{\lambda}$ is a
minimal hitting set of size $k$, from (2) we see that the union $U(k)$ of all
minimal hitting sets of size $k$ is the entire node set of $T_b$.

The number of nodes in $T_b$ is greater than $b^r$, and hence
$\lvert U(k) \rvert > b^r$.
Since $k \leq br$, we have that $\lvert U(k) \rvert > b^r\geq (k/r)^r = r^{-r}k^r$.
\end{proof}

\subsection{Non-minimal hitting sets}
\label{sec:nonminimal}

Here we present results on which types of nodes in a planted hitting
set $C$ must be contained in $U(\lvert C \rvert)$ when the hitting set is itself
not minimal. This matters in practice, when $C$ may not
be minimal.

\xhdr{Classes of nodes that must be in $\UsizeC$}
Our first result is that a node in $C$ must appear in $\UsizeC$
if all the other nodes in a hyperedge containing $u$ are not in $C$.
\begin{lemma}\label{lem:outside}
  Let $G$ be a rank-$r$ hypergraph with planted hitting setting $C$.  If $u \in C$
  and there exists a hyperedge $(u,v_1,....,v_i)$ with $i \leq r-1$ with
  $v_1,\ldots,v_i \notin C$, then $u \in \UsizeC$.
\end{lemma}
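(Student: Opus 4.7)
The plan is to exhibit a minimal hitting set of size at most $\lvert C \rvert$ that contains $u$, which by definition places $u$ in $U(\lvert C \rvert)$. The natural candidate is obtained by ``pruning'' $C$ itself: starting from $C$, repeatedly remove any node $w \ne u$ whose removal still leaves a hitting set, stopping when no further removal is possible. This process terminates because $C$ is finite and the set only shrinks, and the final set $C' \subseteq C$ is by construction a hitting set in which no node other than $u$ can be removed while preserving the hitting property.

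The key step is to observe that $u$ itself also cannot be removed, so $C'$ is genuinely minimal. This is exactly where the hypothesis about the hyperedge $e = \{u, v_1, \ldots, v_i\}$ enters. By assumption, $v_1, \ldots, v_i \notin C$, so at every stage of the pruning $C' \cap e = \{u\}$, because $C'$ only ever loses nodes of $C$. Hence removing $u$ from $C'$ would leave $e$ unhit, so $u$ remains in the pruned set. Combined with the fact that $C'$ is minimal with respect to the removal of any other node, this shows that $C'$ is a minimal hitting set of $G$ containing $u$.

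Finally, since $C' \subseteq C$, we have $\lvert C' \rvert \le \lvert C \rvert$, so $C'$ is a minimal hitting set of size at most $\lvert C \rvert$. By the definition of $U(\lvert C \rvert)$ (Definition~\ref{def:Uk}), it follows that $u \in C' \subseteq U(\lvert C \rvert)$.

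I expect no substantive obstacle: the only thing to be careful about is to verify that the local condition (no node other than $u$ from $e$ lies in $C$) propagates to every intermediate pruned set, which is automatic since we only ever remove nodes, never add. No appeal to Lemma~\ref{lem:UMHS} or any bound on $r$ is needed; the argument is purely combinatorial and works for hypergraphs of arbitrary rank.
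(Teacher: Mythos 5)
Your proof is correct and follows essentially the same route as the paper: prune $C$ down to a minimal hitting set $C'\subseteq C$ and observe that the hyperedge $e=\{u,v_1,\ldots,v_i\}$, which meets $C$ only in $u$, forces $u$ to survive the pruning, so $u$ lies in a minimal hitting set of size at most $\lvert C\rvert$ and hence in $\UsizeC$. Your write-up is in fact slightly more careful than the paper's, since you explicitly verify that the condition $C'\cap e=\{u\}$ persists through every pruning step.
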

\begin{proof}
We perform a pruning of $C$ as follows. Check whether there exists a node
$w$ with $C \setminus \{w\}$ still being a hitting set of smaller
cardinality. If this is the case, update $C$ to $C \setminus \{w\}$ and continue
pruning. Upon termination, the remaining set is a minimal hitting set
$C' \subseteq C$. Clearly, $\lvert C' \rvert \leq \lvert C \rvert$ so that $C'\subset \UsizeC$.
However, $u \in U(\lvert C \rvert)$ since $(u,v_1,\ldots,v_i)$ is a hyperedge where $v_1,\ldots,v_i\notin C$.
\end{proof}

The above lemma provides a class of nodes in a planted hitting set which are
guaranteed to be in the union of minimal hitting sets of size no more than that
of the planted hitting set itself. Next, we show that nodes adjacent to other
nodes deeply integrated into $C$ must also be in $\UsizeC$.  Formally, define a
node $v$ to be in the \emph{interior} of a hitting set $C$ if all hyperedges
containing $v$ are comprised entirely of nodes in $C$.  
Our next result is that a node in $C$ must appear in $\UsizeC$
if all the other nodes in a hyperedge containing $u$ are in the
interior of $C$.
\begin{lemma}\label{lem:interior}
  Let $G$ be a rank-$r$ hypergraph with planted hitting set $C$.
  If $u \in C$ and there exists a hyperedge $(u,v_1,\ldots,v_i)$ where $v_1,\ldots,v_i$ are in
  the interior of $C$, then $u \in U(\lvert C \rvert)$.
\end{lemma}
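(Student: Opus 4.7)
My plan is to mimic the pruning strategy of Lemma~\ref{lem:outside}: construct a hitting set $\hat{C}$ of $G$ of size at most $|C|$ that contains $u$ and in which $u$ is essential (that is, $\hat{C}\setminus\{u\}$ fails to hit some hyperedge). The pruning procedure from Lemma~\ref{lem:outside} applied to $\hat{C}$ then produces a minimal hitting set $\hat{C}^{*}\subseteq\hat{C}$ of size at most $|C|$ that preserves $u$, certifying $u\in U(|C|)$.

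Write $S=\{v_1,\ldots,v_i\}$. The interior hypothesis implies that every hyperedge of $G$ meeting $S$ is contained in $C$; hence $C\setminus S$ already hits every hyperedge of $G$ except possibly those contained wholly in $S$, namely the hyperedges of the induced subhypergraph $G[S]$. Letting $M$ be a minimal hitting set of $G[S]$ (so $M\subseteq S$ and $|M|\le|S|=i$, since $S$ itself is a hitting set of $G[S]$), I would define
\[
\hat{C}=(C\setminus S)\cup M.
\]
A short case analysis on a hyperedge's intersection with $S$ shows $\hat{C}$ is a hitting set of $G$: hyperedges contained in $S$ are hit by $M$; hyperedges meeting but not contained in $S$ are subsets of $C$ by the interior property and so contain a node in $C\setminus S$; hyperedges disjoint from $S$ are hit by $C\setminus S$. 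The size bound $|\hat{C}|=|C|-i+|M|\le|C|$ is immediate.

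To certify $u$'s essentiality in $\hat{C}$, the natural witness is $e$ itself: $e\cap\hat{C}=\{u\}\cup M$. When $G[S]$ has no hyperedges, $M=\emptyset$ and $e$ is hit only by $u$ in $\hat{C}$; pruning then preserves $u$ and the proof concludes directly. The main obstacle is the case $M\ne\emptyset$: some $v_j$ must remain in $\hat{C}$ to hit an edge lying inside $S$, and $u$ is no longer forced via $e$ alone. I expect to address this either by choosing $M$ with minimum overlap with $e$ (though $M\subseteq S\subseteq e$ means some overlap is unavoidable when $G[S]$ has edges), or by arguing that $u$ inherits essentiality from a different hyperedge incident to $u$ — ideally one that reaches outside $C$, which would reduce the situation to the hypothesis of Lemma~\ref{lem:outside}.
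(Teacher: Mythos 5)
Your write-up is not a complete proof: the case $M\neq\emptyset$ that you flag in your final paragraph is left open, and neither of the two escape routes you sketch can work, because the statement actually fails in that case. Concretely, take $V=\{u,v_1,v_2,c,w\}$ with hyperedges $e=\{u,v_1,v_2\}$, $e'=\{v_1,v_2\}$, $e''=\{c,w\}$, and planted hitting set $C=\{u,v_1,v_2,c\}$. Every hyperedge containing $v_1$ or $v_2$ lies inside $C$, so $v_1,v_2$ are in the interior of $C$ and the hypothesis of \cref{lem:interior} holds for $u$ via $e$. But any hitting set must contain $v_1$ or $v_2$ in order to hit $e'$, and that node already hits $e$, so $u$ is redundant in every hitting set containing it; hence no minimal hitting set contains $u$ and $u\notin U(\lvert C\rvert)$. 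This is exactly your obstruction: $G[S]$ contains the hyperedge $e'$, so $M\neq\emptyset$ with $M\subseteq S\subseteq e$, which destroys $u$'s essentiality, and there is no other hyperedge at $u$ from which to inherit it.

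For comparison, the paper's own proof simply asserts that $C_0=C\setminus\{v_1,\ldots,v_i\}$ is still a hitting set and then prunes, noting that $e\cap C_0=\{u\}$ protects $u$. That assertion is justified only when no hyperedge is contained in $\{v_1,\ldots,v_i\}$ --- i.e., exactly your $M=\emptyset$ case --- and the example above breaks it otherwise. So on the case the paper's argument actually covers, your $\hat C=C\setminus S$ with witness $e\cap\hat C=\{u\}$ reproduces the paper's proof, carried out more carefully; the case you could not close is a genuine counterexample to the lemma as stated rather than a gap you failed to bridge. The statement (and both arguments) can be repaired by adding a hypothesis that excludes hyperedges properly contained in $\{v_1,\ldots,v_i\}$; in particular it holds verbatim for $r$-uniform hypergraphs, since any hyperedge contained in $S$ would have fewer than $r$ nodes.
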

\begin{proof}
Since $v_1,\ldots,v_i$ are in the interior of $C$,
$C_0=C\setminus\{v_1,\ldots,v_i\}$ is a hitting set as well. Now perform pruning
on $C_0$ as in the previous proof, the output of which is $C'\subset C_0$, and
since clearly $|C'|\leq \lvert C \rvert$, we must have that $C'\subset \UsizeC$.
Clearly $u$ was not deleted during pruning since $(u,v_1,\ldots,v_i)$ is a
hyperedge with $v_1,\ldots,v_i\notin C$, so that $u\in C'\subset \UsizeC$.
\end{proof}

\xhdr{Greedy matchings intersect the planted hitting set}
One way of finding hitting sets is through greedy algorithms for set cover.  The
classical matching algorithm is simple (\cref{alg:greedy}): loop through each
hyperedge and if no vertex in the current hyperedge is in the current cover $S$, add all to
the $S$. The greedy algorithm produces a set that is both a hitting set
and a maximal matching. To see that it is a hitting set, suppose that all nodes
in a hyperedge $h$ were not added. Then $h$ would have been added to $S$ at the
time the algorithm processed it. The output is a maximal matching because if
we could append another hyperedge $h$ to $S$, then $h$ would have already
been added to $S$ when the algorithm processed it.

\begin{algorithm}[tb]
\caption{Maximal matching $r$-approximation to the minimum hitting set that intersects the core.}
\label{alg:greedy}
\begin{algorithmic}[1]
\State \textbf{Input}:
Hypergraph $G=(V,E)$ of rank $r$\\
\textbf{Output}: 
Set cover $S$ with $\lvert S \rvert \leq rk^*$.
\State $S \gets \emptyset$
\For {$(u_1,....,u_i)\in E$}
\If {$u_1,....,u_i\notin S$}
\State $S \gets S \cup \{u_1,....,u_i\}$
\EndIf
\EndFor
\State \textbf{return} $S$
\end{algorithmic}
\end{algorithm}

Let $k^*$ be the \emph{minimum} hitting set size (the smallest size of all minimal hitting sets). The output of \cref{alg:greedy} is an $r$-approximation if the
hypergraph $G$ is an $r$-uniform hypergraph: in the worst case, any hitting set
contains at least one of $r$ nodes from each hyperedge, and we therefore have
$k^* \geq \lvert S \rvert / r$.
It turns out that this greedy algorithm must also partially overlap with the planted hitting set.
We formalize this in the following lemma.
\begin{lemma}
\label{lem:overlap}
Let $\lvert B \rvert \le bk^*$ for some hitting set $B$.
If the input to \cref{alg:greedy} is an rank-$r$ hypergraph, then the output $S$
satisfies $\lvert B \cap S \rvert / \lvert B \rvert \ge \frac{1}{rb}$.
\end{lemma}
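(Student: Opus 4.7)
The plan is to exploit the fact that the greedy algorithm implicitly produces a maximal matching $M$ whose vertex union is exactly $S$, and then to chain three inequalities: one that lower-bounds $|B \cap S|$ by $|M|$ using disjointness of the matched hyperedges, one that lower-bounds $|M|$ by $|S|/r$ using the rank bound, and one that lower-bounds $|S|$ by $k^*$ using the fact that $S$ is a hitting set.

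First I would set up notation: let $M = \{h_1, \ldots, h_m\}$ be the hyperedges that triggered the \texttt{if}-branch in \cref{alg:greedy}, so that $S = \bigcup_i h_i$. By construction the $h_i$ are pairwise disjoint (each was added precisely because none of its vertices were yet in $S$), so $|S| = \sum_i |h_i| \le rm$, giving $m \ge |S|/r$. Since $S$ itself is a hitting set (as noted in the text preceding the lemma), we also have $|S| \ge k^*$, hence $m \ge k^*/r$.

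Next I would use that $B$ is a hitting set to intersect each $h_i$: for every $i$ there is some $b_i \in B \cap h_i \subseteq B \cap S$. Because the $h_i$ are pairwise disjoint, the $b_i$ are distinct, so $|B \cap S| \ge m \ge k^*/r$. Combining with the hypothesis $|B| \le b k^*$ gives
\begin{equation*}
\frac{|B \cap S|}{|B|} \;\ge\; \frac{k^*/r}{b k^*} \;=\; \frac{1}{rb},
\end{equation*}
which is the desired bound.

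I do not anticipate a serious obstacle here; the only subtlety is being careful that the lemma does not require $G$ to be uniform, so the inequality $|S| \le rm$ (as opposed to equality) must be used. Once this is handled, the argument is a clean two-line chain based on disjointness of the greedy matching, and everything else follows from the already-established facts that $S$ is a hitting set and $|S| \ge k^*$.
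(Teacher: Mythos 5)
Your proof is correct and follows essentially the same route as the paper's: lower-bound the number of disjoint hyperedges in the greedy matching by $\lvert S \rvert / r \ge k^*/r$, note that the hitting set $B$ must meet each of these disjoint hyperedges in a distinct vertex, and divide by $\lvert B \rvert \le bk^*$. If anything, your version is cleaner --- the paper's write-up detours through the minimum hyperedge size $r_1$ with an inequality chain whose directions are stated confusingly, whereas your argument via $\lvert S \rvert \le rm$ and $\lvert S\rvert \ge k^*$ is the tight, correctly-oriented form of the same idea.
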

\begin{proof}
Let $r_1$ be the number of vertices in the hyperedge containing the least
vertices. The set $S$ contains within itself all vertices from $h$ hyperedges where $h$
satisfies $h \leq \lvert S \rvert / r_1 \leq rk^* / r_1$. Since $B$ is a hitting set, it must
contain at least one vertex from each of the $h$ hyperedges in $S$. From this,
we obtain $\lvert S \cap B \rvert \geq h \geq k^* / r_1 \geq \frac{1}{br}\lvert B \rvert$.
\end{proof}
\Cref{lem:overlap} guarantees that the greedy algorithm will output a result that overlaps any
hitting set, including the planted one. Thus, we immediately get the following corollary.

\begin{coro}
\label{cor:plantedintersection}
If the planted hitting set $C$ has size $\lvert C \rvert\leq ck^*$ then the
output of Algorithm 1 is a set $S$ with
$\lvert S\cap C \rvert / \lvert C \rvert \geq \frac{1}{cr}$,
i.e., $S$ intersects at least a fraction of $\frac{1}{cr}$ of $C$.
\end{coro}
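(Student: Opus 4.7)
The plan is essentially a one-line application of \cref{lem:overlap} to the planted hitting set $C$. The hypothesis of the corollary is that $\lvert C \rvert \le c k^*$, and the planted set $C$ is itself a hitting set of $G$ by assumption (this is the very structural property on which the entire paper rests). Thus $C$ meets the hypothesis of \cref{lem:overlap} with the free parameter $b$ chosen to be $c$.

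Concretely, I would first note that $C$ qualifies as the ``$B$'' in \cref{lem:overlap}, since (i) $C$ is a hitting set and (ii) $\lvert C \rvert \le c k^*$ by assumption, which matches the bound $\lvert B \rvert \le b k^*$ with $b = c$. Then, invoking \cref{lem:overlap} directly on the output $S$ of \cref{alg:greedy} on the rank-$r$ hypergraph $G$ yields
\[
\frac{\lvert S \cap C \rvert}{\lvert C \rvert} \;\ge\; \frac{1}{rb} \;=\; \frac{1}{rc},
\]
which is exactly the statement of the corollary.

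There is no substantive obstacle here, since all the combinatorial work (bounding the matching by the minimum hitting set size, relating $\lvert S \rvert$ to $r$ hyperedges, and using that any hitting set must intersect each chosen hyperedge) has already been done inside the proof of \cref{lem:overlap}. The only thing worth flagging for a careful reader is that the lemma statement is phrased for an arbitrary hitting set $B$ with a size bound of the form $b k^*$, so the substitution $B \leftarrow C$, $b \leftarrow c$ is valid without any additional assumption about minimality of $C$ --- which is precisely why this corollary is useful for the non-minimal regime analyzed in \cref{sec:nonminimal}.
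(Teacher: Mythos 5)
Your proposal is correct and matches the paper's own reasoning exactly: the paper derives \cref{cor:plantedintersection} as an immediate consequence of \cref{lem:overlap}, noting that the lemma applies to any hitting set (including the planted $C$) and substituting $B \leftarrow C$, $b \leftarrow c$. Nothing further is required.
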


We can process the hyperedges in any order within \cref{alg:greedy}, and this
will be important for our practical recovery algorithm that we develop later.
Another immediate corollary of \cref{lem:overlap} says that regardless of the
processing, the outputs must be fairly similar.
\begin{coro}\label{cor:overlap}
  Any two outputs $S_1$ and $S_2$ of \cref{alg:greedy} satisfy
  $\lvert S_1 \cap S_2 \rvert \geq \frac{1}{r^2}\max(\lvert S_1 \rvert, \lvert S_2 \rvert)$.
\end{coro}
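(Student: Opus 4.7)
The plan is to deduce Corollary~\ref{cor:overlap} as a direct, symmetric application of Lemma~\ref{lem:overlap}, using the fact that every output of Algorithm~\ref{alg:greedy} is itself a hitting set of bounded size. The main observation is that Lemma~\ref{lem:overlap} compares the greedy output $S$ to an \emph{arbitrary} hitting set $B$ satisfying $|B| \le bk^*$; since any run of Algorithm~\ref{alg:greedy} produces a hitting set whose size is at most $rk^*$ (the $r$-approximation guarantee noted just before the algorithm), each $S_i$ can itself play the role of $B$ with $b = r$.

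Concretely, I would first recall that both $S_1$ and $S_2$ are hitting sets of $G$ (this is the correctness argument given for the greedy matching algorithm) and that $|S_i| \le rk^*$ for $i = 1, 2$. Then I would apply Lemma~\ref{lem:overlap} twice: once with $S = S_1$ and $B = S_2$, giving
\[
\lvert S_1 \cap S_2 \rvert \;\ge\; \tfrac{1}{r^2}\,\lvert S_2 \rvert,
\]
and once with the roles reversed ($S = S_2$, $B = S_1$), giving
\[
\lvert S_1 \cap S_2 \rvert \;\ge\; \tfrac{1}{r^2}\,\lvert S_1 \rvert.
\]
Taking the maximum of the two right-hand sides yields the claimed bound $\lvert S_1 \cap S_2 \rvert \ge \tfrac{1}{r^2}\max(\lvert S_1\rvert,\lvert S_2\rvert)$.

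There is no real obstacle here beyond making sure the hypotheses of Lemma~\ref{lem:overlap} are met by each greedy output, which they are because (i) the output is a hitting set, and (ii) the output has cardinality at most $rk^*$, so we may set $b = r$. The only subtlety worth stating explicitly in the write-up is that the lemma was proved for an arbitrary hitting set $B$, and so applies equally to another greedy output, regardless of which hyperedge ordering produced it; thus both directions of the inequality follow from the same lemma, and symmetry closes the argument.
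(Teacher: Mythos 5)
Your proposal is correct and matches the paper's intent: the paper states \cref{cor:overlap} as an immediate consequence of \cref{lem:overlap} without spelling out the details, and your argument---taking each greedy output as the hitting set $B$ with $b = r$ (justified by $\lvert S_i \rvert \le rk^*$), applying the lemma in both directions, and taking the maximum---is exactly the intended derivation. No gaps.
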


Thus far, we have made no assumptions on the structure of the hypergraph.
In the next section, we show that assuming a stochastic block model type
hypergraph structure gives substantial improvements in recovery.

\subsection{Recovery in a random hypergraph model}
Imposing structure on $C$ and $G$ will enable us to obtain more information. In
particular, we will show how we can obtain tighter bounds under certain random
graph models for 3-uniform hypergraphs.  Let us assume that the hyperedges are
generated according to a stochastic block model (SBM) for
hypergraphs~\cite{Debarghya-2014-Consistency}. One block will be the core $C$
and the other the fringe nodes $F$. We assume that there is zero
probability of a hyperedge containing only nodes in $F$. Explicitly, we will
state that the probability of a hyperedge between nodes in $C$ is $p$ while the
probability of a hyperedge containing at least one node in $C$ and at least one node in $F$ is $q$.

We first provide a lemma on the independence number of hypergraphs drawn
from the hypergraph SBM. This will help us control the size of $\lvert \UsizeC \rvert$.
\begin{lemma}\label{lem:independence}
  Let $\alpha(G)$ be the independence number of an $r$-uniform hypergraph $G$
  drawn from the SBM on $n$ vertices. Then
  \[
  \textnormal{Pr}(\alpha(G)<k)\geq 1-n^{-\frac{1}{2}\left(\frac{3r!\ln n}{2p}+(r-1)\right)}, \; k=\frac{3r!\ln n}{2p}+(r-1)
  \]
\end{lemma}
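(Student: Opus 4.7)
The plan is to apply a first-moment / union-bound argument to the event that some $k$-subset of vertices forms an independent set. First I would write
\[
\Pr(\alpha(G) \geq k) \;\leq\; \binom{n}{k}\,\Pr(S \text{ is independent})
\]
for an arbitrary fixed $k$-set $S$. Under the hypergraph SBM, each $r$-subset of $S$ that lies in $C$ is a hyperedge independently with probability $p$ (and mixed $r$-subsets appear with probability $q$, which only helps). Bounding away the $q$ and $F$-only contributions, I would use the cleanest estimate $\Pr(S\text{ independent}) \leq (1-p)^{\binom{k}{r}}$. Combined with the standard inequalities $\binom{n}{k}\leq n^k$ and $1-p\leq e^{-p}$, this yields
\[
\Pr(\alpha(G)\geq k) \;\leq\; \exp\!\bigl(k\ln n - p\tbinom{k}{r}\bigr).
\]

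To match the exponent in the claimed bound, it suffices to show $p\binom{k}{r} \geq \tfrac{3}{2}k\ln n$, since then $\Pr(\alpha(G)\geq k)\leq n^{-k/2}$, which is exactly $n^{-\frac{1}{2}(\tfrac{3r!\ln n}{2p}+(r-1))}$ with the stated choice of $k$. The key algebraic step is to lower bound $\binom{k}{r}$ using the particular shift $k = m+(r-1)$ where $m = \tfrac{3r!\ln n}{2p}$. Writing out $k(k-1)\cdots(k-r+1)$, every factor is at least $m$, so $\binom{k}{r}\geq k\,m^{r-1}/r!$; plugging in $m$ reduces the required inequality to $\bigl(\tfrac{3r!\ln n}{2p}\bigr)^{r-2}\geq 1$, which holds whenever $n$ is large enough for $m\geq 1$ (and trivially when $r=2$).

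The main obstacle I anticipate is this algebraic bookkeeping: it is easy to lose a factor of $r!$ or get the $(r-1)$ shift off by one, and the bound has to come out as exactly $n^{-k/2}$ to reproduce the stated exponent $-\tfrac{1}{2}\bigl(\tfrac{3r!\ln n}{2p}+(r-1)\bigr)$. A secondary subtlety is interpretive: because $F$ contains no hyperedges, the raw independence number of $G$ on $n=|C|+|F|$ vertices can be inflated by fringe nodes, so one must either read $G$ as the subhypergraph induced on $C$ (the version relevant for controlling $|U(|C|)|$ in the next step) or invoke the mixed-edge probability $q$ to rule out large cross-block independent sets. I would adopt the induced-on-$C$ reading throughout, since that is what the downstream argument needs and it makes the edge probabilities uniformly equal to $p$ so that the union bound above applies verbatim.
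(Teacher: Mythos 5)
Your proposal is correct and follows essentially the same route as the paper: a first-moment/union bound over $k$-subsets, the estimate $\Pr(S\text{ independent})\leq(1-p)^{\binom{k}{r}}$, and the inequalities $\binom{n}{k}\leq n^k$ and $1-p\leq e^{-p}$, arriving at $n^{-k/2}$ for the stated $k$. The only (immaterial) difference is bookkeeping: the paper lower-bounds $\binom{k}{r}\geq k(k-r+1)/r!$, which makes the exponent come out as exactly $\tfrac{3}{2}k\ln n$ with no side condition, whereas your bound $\binom{k}{r}\geq k\,m^{r-1}/r!$ needs $m\geq 1$ (always true here); your observation that the lemma should be read on the subhypergraph induced on $C$ matches how the paper actually invokes it in \cref{thm:prob_bound}.
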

\begin{proof}
We follow a proof technique common in the combinatorics literature~\cite{Kostochka-2014-independent}.
Let $z = \binom{\lvert G \rvert}{k}$,
$S_1,\ldots,S_{z}$ the size-$k$ subsets of vertices of $G$,
and $X_i$ an indicator random variable for $S_i$ being an independent set of $G$,
so $X_i = 1$ only happens when all $\binom{k}{r}$ possible hyperedges among 
the vertices of $S_i$ do not appear in the hyperedge set of $G$.
Note that this happens with probability
$(1-p)^{\binom{k}{r}}=(1-p)^{\frac{k\cdot\cdot\cdot\cdot(k-r+1)}{r!}}\leq(1-p)^{\frac{k(k-r+1)}{r!}}$. Thus,
\begin{eqnarray*}
  \textstyle \textnormal{Pr}(\alpha(G)\geq k)
  &=& \textstyle
  \text{Pr}(\sum_iE(X_i) \geq 1) 
    \leq \sum_iE(X_i)
    = \binom{n}{k}(1-p)^{\binom{k}{r}}\\
  &\leq& \textstyle
    \binom{n}{k}(1-p)^{\frac{k(k-r+1)}{r!}}
\leq n^k\left((1-p)^{(k-r+1)/r!}\right)^k\\
&\leq& \textstyle
(ne^{-p(k-r+1)/r!})^k\\
&=&n^{-\frac{1}{2}\left(\frac{3r!\ln n}{2p}+(r-1)\right)}
\end{eqnarray*}
so that
$\textnormal{Pr}(\alpha(G)<k)\geq 1-n^{-\frac{1}{2}\left(\frac{3r!\ln n}{2p}+(r-1)\right)}$,
as desired.
\end{proof}

This lemma enables a result for 3-uniform hypergraphs.
\begin{theorem}\label{thm:prob_bound}
For 3-uniform hypergraphs,
\[
\textnormal{Pr}\left(\lvert \UsizeC \rvert  \leq  O(\lvert C \rvert^2\ln \lvert C \rvert+\lvert C \rvert^2 x)+o(\lvert C \rvert^3) \right)
\geq 1-n^{-\frac{1}{2}\left(\frac{9\ln \lvert C \rvert}{p}+2\right)}
\]
with $x$ as defined in \cite{Damaschke-2009-unions}.
\end{theorem}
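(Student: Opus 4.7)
The plan is to condition on the high-probability event produced by \cref{lem:independence} applied to the subhypergraph $G[C]$ induced on the core, then derive a deterministic structural constraint on hitting sets, and finally plug this constraint into the Damaschke--Molokov refined union-of-minimal-hitting-sets bound.

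First, I would apply \cref{lem:independence} with $r=3$, $n=|C|$, and edge probability $p$, viewing $G[C]$ as a $3$-uniform random hypergraph (since every intra-core triple is a hyperedge independently with probability $p$). The lemma specializes to
\[
\textnormal{Pr}\!\left(\alpha(G[C]) < \tfrac{9\ln|C|}{p}+2\right) \;\geq\; 1 - |C|^{-\frac{1}{2}\left(\frac{9\ln|C|}{p}+2\right)},
\]
which is exactly the failure probability stated in the theorem. From here on, I would condition on the event $\alpha(G[C]) < k_0$ where $k_0 := \frac{9\ln|C|}{p}+2 = O(\ln|C|)$.

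Next, I would extract the crucial structural consequence. For any hitting set $H$ of $G$, the set $C \setminus H$ must be independent in $G[C]$: otherwise a hyperedge fully inside $C$ would go unhit. So under our event, every hitting set misses at most $k_0$ nodes of $C$, i.e., $|H \cap C| \geq |C| - k_0$. For a minimal hitting set $H$ of size $\leq |C|$, this forces $|H \cap F| \leq k_0 = O(\ln|C|)$. Thus every minimal hitting set of size at most $|C|$ is a near-copy of $C$, differing by only $O(\ln|C|)$ substitutions of core for fringe nodes, which is the structural handle for bounding $|U(|C|)|$.

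Finally, I would bound $|U(|C|)|$ by splitting into core and fringe contributions. Trivially $|U(|C|)\cap C|\leq |C|$. For the fringe contribution, a node $v\in F$ can enter some minimal hitting set $H$ only if some hyperedge $e\ni v$ satisfies $e\cap H=\{v\}$, and by the previous paragraph the other two nodes of such an $e$ lie in $(C\setminus H)\cup F$ where the $C$-part is confined to an independent set of size at most $k_0$. Counting certificates against this constrained structure is precisely the refined kernelization argument of Damaschke--Molokov~\cite{Damaschke-2009-unions}: the $|C|^2 x$ term carries over verbatim from their parameter-$x$ bound specialized to rank $3$, while the low-independence-number condition replaces the would-be $|C|^3$ factor (from \cref{lem:upper}) with $|C|^2\cdot k_0 = O(|C|^2\ln|C|)$, leaving only $o(|C|^3)$ slack.

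The main obstacle is the last paragraph: invoking Damaschke--Molokov so that the parameter $x$ appears in the claimed form, and tracking constants carefully enough to see that the independence bound really does strip a full factor of $|C|$ from the $\Theta(k^r)$ worst case while the remaining overhead is $o(|C|^3)$. The independence-number step is routine union-bound combinatorics; it is the interfacing with the deterministic Damaschke--Molokov machinery that requires care.
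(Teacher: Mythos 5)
Your probabilistic half matches the paper's. You apply \cref{lem:independence} with $r=3$ to the subhypergraph induced on $C$, and your observation that $C\setminus H$ must be independent in $G[C]$ for any hitting set $H$ is the same fact the paper expresses through the identity $\alpha = n - k^*$ (citing \cite{Halldrsson-2009-independent}): with the stated probability, the minimum hitting set size satisfies $k^* \geq \lvert C\rvert - 9\ln\lvert C\rvert/p - 2$. If anything, your phrasing is the cleaner of the two, since it makes explicit why the bound on $\alpha(G[C])$ constrains hitting sets of the whole hypergraph $G$ and not merely of $G[C]$.

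The gap is in your final paragraph, and you have correctly diagnosed where it is. The paper does not re-run any kernelization or certificate-counting argument; it simply quotes the refined bound of Damaschke and Molokov \cite[Theorem 18]{Damaschke-2009-unions}, which for $3$-uniform hypergraphs states $\lvert U(k)\rvert \leq \frac{1}{4}k^*\bigl(k^2-(k^*)^2+2k^*x\bigr)+o(k^3)$ --- a bound already parameterized by \emph{both} the budget $k$ and the minimum hitting set size $k^*$. With $k=\lvert C\rvert$ and $k^*\geq \lvert C\rvert-O(\ln\lvert C\rvert)$, the factorization $k^2-(k^*)^2=(k-k^*)(k+k^*)=O(\lvert C\rvert\ln\lvert C\rvert)$ yields the $O(\lvert C\rvert^2\ln\lvert C\rvert)$ term, and $\frac{1}{2}(k^*)^2x\leq\frac{1}{2}\lvert C\rvert^2 x$ yields the $O(\lvert C\rvert^2 x)$ term. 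So the deterministic half of the proof is a one-line substitution into an existing theorem rather than a new combinatorial argument. Your plan to ``interface with the Damaschke--Molokov machinery'' by bounding the fringe contribution of each near-copy of $C$ gestures at the right mechanism, but it is not an argument as written: you never say how $x$ would emerge from your counting, and you acknowledge as much. To close the gap you need only locate and invoke the $(k,k^*)$-parameterized bound above; nothing else in your outline needs to change.
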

\begin{proof}
First, $\alpha(G)=|G|-k^*$~\cite[Theorem 3.15]{Halldrsson-2009-independent}, so
$\alpha(C)=\lvert C \rvert-k^*$ and
$k^*=\lvert C \rvert-\alpha(C)\geq \lvert C \rvert- 9\ln \lvert C \rvert / p -2$,
where the inequality follows from \cref{lem:independence}
and the inequality holds with probability at least $1-n^{-\frac{1}{2}\left(\frac{9\ln \lvert C \rvert}{p}+2\right)}$.
For 3-uniform hypergraphs, $\lvert U(k) \rvert\leq \frac{1}{4}k^*(k^2-(k^*)^2+2k^*x)+o(k^3)$~\cite[Theorem 18]{Damaschke-2009-unions}.
\end{proof}

\Cref{thm:prob_bound} enables us to find a small set $J$ that contains the core $C$ with high probability.
\begin{theorem}
  Let $C$ with $\lvert C \rvert=k$ be a planted hitting set in a 3-uniform hypergraph
  drawn from the SBM parameterized by $p$ and $q$ with $ck$ nodes for some $c \geq 1$.
  Then with high probability in $k$, there is a set $J$ with $C \subset J$ and
  $\lvert J \rvert\leq O(k^2\ln k+k^2x)+o(k^3)$.
\end{theorem}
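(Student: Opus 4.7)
The plan is to take $J$ to be precisely the union of all minimal hitting sets of size at most $k$, i.e., $J = U(k) = U(\lvert C \rvert)$ from \cref{def:Uk}. Containment $C \subset J$ is then immediate: under the standing minimality assumption, $C$ itself is one of the minimal hitting sets of size at most $k$ whose union forms $U(k)$, so $C \subset U(k) = J$. No new combinatorial argument is needed for this inclusion; the heavy lifting has already been done in \cref{thm:prob_bound}.

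Next, I would invoke \cref{thm:prob_bound} directly with $\lvert C \rvert = k$ to get
\[
\textnormal{Pr}\bigl(\lvert J \rvert \leq O(k^2 \ln k + k^2 x) + o(k^3)\bigr) \geq 1 - n^{-\frac{1}{2}\left(\frac{9\ln k}{p} + 2\right)}.
\]
Whenever the event inside the probability occurs, the set $J$ satisfies both the containment $C \subset J$ and the size bound stated in the theorem. So it only remains to show that under the SBM hypothesis (with $n = ck$ and $c \geq 1$) this event occurs with probability tending to $1$ as $k \to \infty$.

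For that final step, I would substitute $n = ck$ into the failure probability and estimate
\[
(ck)^{-\frac{1}{2}\left(\frac{9\ln k}{p} + 2\right)} \leq k^{-\frac{1}{2}\left(\frac{9\ln k}{p} + 2\right)} = \frac{1}{k}\exp\!\left(-\frac{9(\ln k)^2}{2p}\right),
\]
where the first inequality uses $c \geq 1$ so that $c^{-\epsilon} \leq 1$ for nonnegative exponents. The right-hand side decays faster than any polynomial in $k$, so the desired event holds with high probability in $k$. The main obstacle is conceptual rather than technical: one must notice that the probability bound in \cref{thm:prob_bound} is stated in terms of $n$ while the size bound is in terms of $\lvert C \rvert = k$, and the assumption $n = ck$ is exactly what converts the former into a bound that is high probability in $k$. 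Aside from this bookkeeping and the appeal to minimality for $C \subset U(k)$, the theorem is essentially a corollary of the earlier bounds.
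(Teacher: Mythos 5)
Your handling of the size bound and the $n = ck$ bookkeeping is fine, but there is a genuine gap in how you establish the containment $C \subset J$. You justify it by appealing to a ``standing minimality assumption'' on $C$, so that $C$ is itself one of the minimal hitting sets whose union forms $U(k)$. The theorem does not assume $C$ is minimal, and the surrounding development makes clear that it cannot: \cref{sec:nonminimal} is devoted precisely to identifying which nodes of a \emph{non-minimal} planted hitting set are guaranteed to land in $\UsizeC$, and a non-minimal $C$ may well contain nodes (e.g., nodes in the interior of $C$ all of whose hyperedges lie entirely inside $C$) that belong to no minimal hitting set of size at most $\lvert C \rvert$, so $C \subset U(k)$ is simply false in general. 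A telltale sign is that your argument never uses the SBM parameter $q$; if the containment were free, $q$ would play no role in the theorem at all.

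The paper's proof spends essentially all of its effort on exactly this step. It shows that under the SBM, each core node has, with probability $w = 1-(1-q)^{\binom{k(c-1)}{2}}$, at least one hyperedge whose other two members are both fringe nodes; the probability $w^k$ that \emph{every} core node has such a hyperedge tends to $1$ as $k \to \infty$. By \cref{lem:outside}, any such core node must lie in $\UsizeC$, so $C \subseteq \UsizeC$ with probability tending to $1$. Only then does it set $J = \UsizeC$ and invoke \cref{thm:prob_bound} for the size bound, which is the part you did carry out correctly. To repair your proposal, replace the minimality appeal with this probabilistic containment argument (or some equivalent use of \cref{lem:outside} together with the parameter $q$).
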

\begin{proof}
The number of ways to link a vertex in the core to two nodes outside the core is $\binom{k(c-1)}{2}$
so that the probability of a node having at least one hyperedge to two nodes outside the core is $w = 1-(1-q)^{\binom{k(c-1)}{2}}$.
Thus $w^k$ is the probability of each of the nodes in the core having at least one hyperedge to two nodes outside the core.
This probability tends to 1 as $k \to \infty$, so by using \cref{lem:outside}, $C \subseteq \UsizeC$
with probability tending to 1. Setting $J = \UsizeC$ and using \cref{thm:prob_bound} gives the result.
\end{proof}

Of course, the above theorem is only useful if we have a bound on the quantity
$x$ since only in that case are we able to improve our bound on the size of $J$
from $O(k^3)$ to $O(k^2\ln k)$. Absent any additional information, $x\leq k^*$~\cite{Damaschke-2009-unions}.

In the following section, we build a practical algorithm for
recovery of planted hitting sets based on the union of minimal hitting sets, the theory for which was examined in this section.

%
%


\section{A practical recovery algorithm}\label{sec:UMHS}
Based on the theoretical results described above, we now develop a practical
algorithm for recovering a planted hitting set in a hypergraph. To put our
theory into practice, we place several of our theoretical results in
context. First, \cref{lem:overlap} says that \cref{alg:greedy} produces outputs
that \emph{must} overlap with the planted hitting set. Moreover, this was true
\emph{regardless of the order in which we processed the edges}. Thus, the basic
idea of our algorithm is simple: we find many hitting sets $S$ from
\cref{alg:greedy} and take their union.

By \cref{cor:plantedintersection}, if the planted hitting set $C$ is close to a
minimum hitting set, then any iteration of \cref{alg:greedy} will recover a large
fraction of it. Furthermore, by \cref{cor:overlap}, the outputs of
\cref{alg:greedy} must overlap by a modest amount, so taking the unions of
outputs cannot grow too fast. To limit growth further, we can prune the output
of \cref{alg:greedy} to be a minimal hitting set (this also tends to give
better results in practice). Now, \cref{thm:UMHS} says that the union
of pruned outputs will be bounded, provided the graph is large enough
and that the outputs are small enough. This turns out to be the case
in our experimental results. To summarize, our procedure is as follows:
(i) Find a hitting set $S$ from \cref{alg:greedy}, processing the hyperedges in a random order;
(ii) Prune $S$ to be a minimal hitting set $S'$; and
(iii) Repeat steps (i) and (ii) several times and output the union of the $S'$ found in step (ii).
Again, the key result is that \cref{lem:overlap} holds regardless of the
ordering in which the hyperedges are processed. The algorithm is formalized in
\cref{alg:UMHS}. A similar algorithm for the special case of $2$-uniform
hypergraphs (i.e., graphs) was recently analyzed~\cite{Benson-2018-planted}.

\begin{algorithm}[tb]
\caption{Union of Minimal Hitting Sets (UMHS) algorithm}
\label{alg:UMHS}
\begin{algorithmic}[1]
\State \textbf{Input}:
Hypergraph $G=(V,E)$ and number of iterations $N$\\
\textbf{Output}: 
Approximation $S'$ to the planted hitting set in $G$
\State $S'\gets \emptyset$
\For {$n\in \{1,....,N\}$}
\State $S\gets \mbox{Algorithm1(}G)$
\State Prune $S$ to be minimal hitting set of $G$
\State $S'\gets S'\cup S$
\EndFor
\State \textbf{return} $S'$
\end{algorithmic}
\end{algorithm}

The output of \cref{alg:UMHS} is by construction a
\emph{union of minimal hitting sets} (UMHS)
of the input hypergraph $G$, and we will refer to the algorithm as UMHS. The
theoretical guarantees from \cref{sec:minimal} apply here. Specifically, under
the assumption that the planted hitting set $C$ is itself minimal, we are
guaranteed full recovery with sufficient iterations. However, planted hitting
sets are rarely minimal in practice (in fact, they are not for the datasets we
consider). However, \cref{thm:UMHS} still guarantees that the output of the
algorithm will not grow very quickly.

Importantly, \cref{alg:UMHS} runs in time linear in the number of hyperedges.
In particular, suppose the input is a hypergraph $G=(V,E)$ of rank $r$ with a
minimum hitting set of size $k^*$. Then the algorithm produces an output in
$O(Nr^2k^*\lvert E \rvert)$ time. This happens because every call to
\cref{alg:greedy} takes $O(r\lvert E \rvert \rvert)$ time, while pruning $H$ to be a
minimal hitting set of $G$ requires $O(r^2k^* \lvert E \rvert)$ time in the
worst case (in practice, it is much faster). Thus, in the worst
case, pruning to a minimal hitting set takes more time than one iteration of
the greedy algorithm. However, this increase in processing time is not
prohibitive since it only increases by the constant factor $rk^*$. 
Nonetheless, pruning is crucial as in practice we find that it drastically reduces the output size. 


\section{Experimental results}

We now test \cref{alg:UMHS} on a number of real-world datasets and compare and
find that it consistently outperforms several baselines derived from network
centrality and core-periphery structure. We also show that we do not need too
many iterations within \cref{alg:UMHS} (the parameter $N$) for our performance
levels.

%


\begin{table*}[t]
\centering
\begin{minipage}[c]{0.19\textwidth}
\caption{Summary statistics of core-fringe hypergraph datasets.
  We construct $r$-uniform hypergraphs from six corpora for $r = 3,4,5$.
  The DBLP and tag datasets are collections of 50 hypergraphs, so we report the value range.
  }\label{tab:summary_stats}
\end{minipage}
\hfill
\begin{minipage}[c]{0.79\textwidth}
\setlength{\tabcolsep}{2pt}
\begin{tabular}{@{}rr ccc c ccc c ccc@{}}\toprule
  & & \multicolumn{3}{c}{number of nodes}
  & \phantom{!!!} & \multicolumn{3}{c}{size of core ($\lvert C \rvert$)}
  & \phantom{!!!} & \multicolumn{3}{c}{number of hyperedges}  \\
\cmidrule{3-5} \cmidrule{7-9} \cmidrule{11-13} 
& $r = $ & $3$ & $4$ & $5$ && $3$ & $4$ & $5$ && $3$ & $4$ & $5$ \\ \midrule
Enron & & 1,283 & 976 & 869 && 84 & 73 & 77 && 2,361 & 1,048 & 614 \\
Avocado & & 5,521 & 3,510 & 2,965 && 227 & 218 & 211 && 21,690 & 12,455 & 6,973 \\
W3C & & 1,778 & 749 & 308 && 353 & 237 & 125 && 1,882 & 389 & 88 \\
DBLP & & 3--875 & 4--514 & 5--261 && 3--114 & 4--90 & 5--55 && 1--753 & 1--265 & 1--78 \\
Math tags & & 93--1,180 & 97--1,239 & 121--1,153 && 3--15 & 4--15 & 5--15 && 112--13,771 & 128--17,424 & 98--13,410 \\
Ubuntu tags & & 96--2,005 & 402--2,043 & 225--1,991 && 3--15 & 4--15 & 5--15 && 94--12,990 & 437--13,210 & 220--9,905 \\
\bottomrule
\end{tabular}
\end{minipage}
\end{table*}


\subsection{Data}

The six datasets we use broadly fall into three broad classes based on the types
of planted hitting problems we derive from them. The first group is
three email datasets (Enron, Avocado, and W3C), where the planted hitting set is
a group of people at an organization, and hyperedges come from emails involving
multiple addresses. Next, in the DBLP dataset, a planted hitting set is a set
of authors who publish at the same conference in a given year, and we consider
several such conferences. Finally, in the Stack Exchange tagging datasets, the
core is a one-hop neighborhood of a node, which is a hitting set for the two-hop
neighborhood of the node; again, we consider several hitting sets. In this
sense, the DBLP and tagging datasets are collections of hypergraphs with
core-fringe structure.

\xhdr{Enron~\cite{Klimt-2004-Enron}, Avocado, and W3C~\cite{Craswell-2005-TREC}}
The hyperedges in these datasets are emails, where the
nodes are the sender and all receivers (more specifically, their email
addressses). Repeated hyperedges are discarded. In the Enron data, the core
is a set of 150 employees whose email was made public when the
company was under United States federal investigation. The Avocado corpus comes from a now-defunct technology
company, where the core are employee email accounts (we exclude email accounts associated
with conference rooms or mailing lists from the core).
Finally, the the W3C dataset comes from emails on mailing lists
related to W3C; the core is all email addresses with a
\texttt{w3c.org} domain.

\xhdr{DBLP}
DBLP is an online computer science bibliography. We construct a hypergraph from
coauthorship data in conference proceedings. We randomly sampled 50 conferences with relatively small cores
and constructed a set of core nodes $C$ from the authors of a given conference
in a randomly selected year. We then took the core-fringe hypergraph to be all
hyperedges involving at least one of these authors.

\xhdr{Math and Ubuntu tags~\cite{Benson-2018-simplicial}}
These datasets come from tags applied to questions on the Math and
AskUbuntu Stack Exchange web sites. Hyperedges are sets of tags that have been applied
to the same question. For both Stack Exchanges, we sampled 50 tags with relatively small cores uniformly at random. We then formed a core set $C$ from the one-hop neighborhood
of that tag, i.e., from the tag and all tags that appear as co-tags with the
tag. The set $C$ is then a planted hitting set for the two-hop neighborhood of
that tag.

We constructed 3-, 4-, and 5-uniform hypergraphs (derived as sub-hypergraphs)
from these datasets to facilitate comparison of UMHS performance to that of other centrality-based
algorithms (furthermore, Stack Exchange only has at most 5 tags in one
post). \Cref{tab:summary_stats} lists summary statistics of the datasets.


\begin{table*}[t]
  \centering
\setlength{\tabcolsep}{5pt}
\caption{%
%
%
  Planted hitting set recovery performance. We compare our proposed union of minimal
  hitting sets (UMHS) method against four hypergraph centrality measures
  for---degree~\cite{Kapoor-2013-centrality},
  clique graph eigenvector~\cite{Benson-2018-three},
  Z-eigenvector~\cite{Benson-2018-three},
  H-eigenvecgtor~\cite{Benson-2018-three}---as well as two core-periphery
  measurements--- Borgatti-Everett~\cite{Borgatti-2000-CP} and k-core~\cite{Seidman-1983-kcore}.
  Each method produces an ordering of vertices, and we measure performance
  by precision at the core size (fraction of top-$\lvert C \rvert$ ranked nodes that
  are in $C$) and area under the precision-recall curve for $r$-uniform hypergraphs
  ($r = 3,4,5$).
  The DBLP and tags datasets are collections of 50 hypergraphs, and we report the mean and
  standard deviation for these.
  UMHS scores outperforming all baselines by at least an 8\% relative improvement are bold.
  Any method's score outperforming UMHS by at least 8\% is also bold.
}
\label{tab:perf_stats}
\begin{tabular}{@{} r @{\quad\;\;} r c @{\quad} c @{\quad\;\;} cccc @{\quad} cc @{}}  \toprule
    & Dataset & $r$ & UMHS & Degree & Clique-eigen & Z-eigen & H-eigen & Borgatti-Everett & k-core \\ \midrule
Precision @ $\lvert C \rvert$
 & Enron       & 3 & \textbf{0.52}       & 0.33       & 0.14       & 0.11    & 0.13       & 0.14       & 0.33     \\
 &             & 4 & \textbf{0.53}       & 0.33       & 0.21       & 0.18    & 0.16       & 0.19       & 0.33       \\
 &             & 5 & \textbf{0.49}       & 0.25       & 0.16       & 0.14    & 0.17       & 0.14       & 0.17       \\ \cmidrule{2-10}
 & Avocado     & 3 &  \textbf{0.91}       & 0.60       & 0.56       & 0.54       & 0.58       & 0.59       & 0.58       \\
 &             & 4 & \textbf{0.82}       & 0.57       & 0.52       & 0.33       & 0.53       & 0.52       & 0.56       \\
 &             & 5 & \textbf{0.72}       & 0.56       & 0.49       & 0.34       & 0.50       & 0.49       & 0.33       \\ \cmidrule{2-10}
 & W3C         & 3 & \textbf{0.51}       & 0.38       & 0.20       & 0.11    & 0.21       & 0.19       & 0.35       \\
 &             & 4 & 0.43       & 0.43       & 0.30       & 0.30    & 0.31       & 0.30       & 0.40       \\
 &             & 5 & 0.30       & \textbf{0.42}       & 0.31       & \textbf{0.34}       & \textbf{0.33}       & \textbf{0.33}       & \textbf{0.41}       \\ \cmidrule{2-10}
 & DBLP        & 3 & 0.59$\pm$0.15 & \textbf{0.63}$\pm$0.15 & 0.44$\pm$0.24 & 0.42$\pm$0.28 & 0.41$\pm$0.26 & 0.41$\pm$0.26 & 0.44$\pm$0.21 \\
 &             & 4 & 0.48$\pm$0.15 & \textbf{0.58}$\pm$0.17 & 0.40$\pm$0.27 & 0.36$\pm$0.29 & 0.39$\pm$0.28 & 0.38$\pm$0.31 & 0.40$\pm$0.22 \\
 &             & 5 & 0.39$\pm$0.15 & \textbf{0.65}$\pm$0.22 & \textbf{0.56}$\pm$0.31 & \textbf{0.59}$\pm$0.35 & \textbf{0.57}$\pm$0.31 & \textbf{0.56}$\pm$0.31 & \textbf{0.56}$\pm$0.25 \\  \cmidrule{2-10}
 & Math tags   & 3 & \textbf{0.80}$\pm$0.14 & 0.58$\pm$0.13 & 0.43$\pm$0.15 & 0.41$\pm$0.14 & 0.45$\pm$0.13 & 0.50$\pm$0.14 & 0.43$\pm$0.12 \\
 &             & 4 & \textbf{0.71}$\pm$0.18 & 0.48$\pm$0.12 & 0.38$\pm$0.12 & 0.31$\pm$0.14 & 0.39$\pm$0.11 & 0.31$\pm$0.10 & 0.38$\pm$0.11 \\
 &             & 5 & \textbf{0.61}$\pm$0.19 & 0.40$\pm$0.10 & 0.33$\pm$0.10 & 0.24$\pm$0.11 & 0.32$\pm$0.10 & 0.31$\pm$0.10 & 0.33$\pm$0.10 \\  \cmidrule{2-10}
 & Ubuntu tags & 3 & \textbf{0.80}$\pm$0.09 & 0.58$\pm$0.16 & 0.36$\pm$0.13 & 0.32$\pm$0.13 & 0.40$\pm$0.15 & 0.55$\pm$0.16 & 0.36$\pm$0.12 \\
 &             & 4 & \textbf{0.78}$\pm$0.13 & 0.50$\pm$0.16 & 0.38$\pm$0.17 & 0.30$\pm$0.15 & 0.40$\pm$0.17 & 0.33$\pm$0.17 & 0.38$\pm$0.15 \\
 &             & 5 & \textbf{0.69}$\pm$0.18 & 0.38$\pm$0.14 & 0.28$\pm$0.12 & 0.23$\pm$0.13 & 0.27$\pm$0.12 & 0.25$\pm$0.12 & 0.28$\pm$0.14 \\  \midrule 
Area under PR curve
 & Enron       & 3 & \textbf{0.18}       & 0.15       & 0.08       & 0.07       & 0.07       & 0.08       & 0.15       \\
 &             & 4 & 0.16       & 0.16       & 0.10       & 0.09       & 0.09       & 0.10       & 0.16       \\
 &             & 5 & \textbf{0.17}       & 0.13       & 0.10       & 0.10       & 0.10       & 0.10       & 0.10       \\ \cmidrule{2-10}
 & Avocado     & 3 & \textbf{0.43}       & 0.38       & 0.33       & 0.31       & 0.36       & 0.37       & 0.36       \\
 &             & 4 & \textbf{0.38}       & 0.35       & 0.30       & 0.15       & 0.31       & 0.30       & 0.34       \\
 &             & 5 & 0.34       & 0.34       & 0.28       & 0.16       & 0.28       & 0.27       & 0.33       \\ \cmidrule{2-10}
 & W3C         & 3 & \textbf{0.36}       & 0.27       & 0.20       & 0.19       & 0.20       & 0.20       & 0.25       \\
 &             & 4 & \textbf{0.40}       & 0.37       & 0.31       & 0.31       & 0.31       & 0.31       & 0.35       \\
 &             & 5 & 0.41       & 0.41       & 0.38       & 0.38       & 0.38       & 0.38       & 0.41       \\ \cmidrule{2-10}
 & DBLP        & 3 & \textbf{0.57}$\pm$0.17 & 0.52$\pm$0.18 & 0.39$\pm$0.24 & 0.40$\pm$0.28 & 0.38$\pm$0.25 & 0.37$\pm$0.25 & 0.47$\pm$0.23 \\
 &             & 4 & 0.47$\pm$0.19 & 0.46$\pm$0.17 & 0.36$\pm$0.26 & 0.35$\pm$0.27 & 0.36$\pm$0.26 & 0.36$\pm$0.29 & 0.39$\pm$0.22 \\
 &             & 5 & 0.54$\pm$0.23 & 0.57$\pm$0.24 & 0.53$\pm$0.31 & \textbf{0.58}$\pm$0.34 & 0.54$\pm$0.31 & 0.54$\pm$0.31 & \textbf{0.58}$\pm$0.26 \\  \cmidrule{2-10}
 & Math tags   & 3 & \textbf{0.68}$\pm$0.18 & 0.36$\pm$0.14 & 0.21$\pm$0.13 & 0.19$\pm$0.12 & 0.23$\pm$0.12 & 0.27$\pm$0.14 & 0.26$\pm$0.12 \\
 &             & 4 & \textbf{0.56}$\pm$0.20 & 0.25$\pm$0.11 & 0.17$\pm$0.11 & 0.12$\pm$0.11 & 0.17$\pm$0.09 & 0.11$\pm$0.07 & 0.17$\pm$0.10 \\
 &             & 5 & \textbf{0.48}$\pm$0.21 & 0.18$\pm$0.08 & 0.13$\pm$0.07 & 0.08$\pm$0.07 & 0.12$\pm$0.06 & 0.11$\pm$0.06 & 0.14$\pm$0.07 \\ \cmidrule{2-10}
 & Ubuntu tags & 3 & \textbf{0.65}$\pm$0.14 & 0.37$\pm$0.17 & 0.15$\pm$0.11 & 0.13$\pm$0.11 & 0.19$\pm$0.13 & 0.33$\pm$0.17 & 0.20$\pm$0.12 \\
 &             & 4 & \textbf{0.64}$\pm$0.16 & 0.28$\pm$0.16 & 0.18$\pm$0.14 & 0.12$\pm$0.12 & 0.19$\pm$0.15 & 0.14$\pm$0.14 & 0.18$\pm$0.13 \\
 &             & 5 & \textbf{0.54}$\pm$0.18 & 0.17$\pm$0.11 & 0.10$\pm$0.08 & 0.07$\pm$0.09 & 0.09$\pm$0.08 & 0.08$\pm$0.08 & 0.12$\pm$0.10 \\
\bottomrule
\end{tabular}
\end{table*}

\subsection{Recovery results}

We tested the UMHS algorithm on several datasets, and it outperforms other
algorithms consistently.
For baselines, we use two techniques. First, we use notions of network
centrality developed for hypergraphs with the idea that nodes in the core could
be identified via large centrality scores. Specifically, we compare against
(i) hypergraph degree centrality (the
number of hyperedges in which a node appears)~\cite{Kapoor-2013-centrality};
(ii) clique graph eigenvector (eigenvector centrality
on the weighted clique graph, where $w_{ij}$ is the number of hyperedges
containing $i$ and $j$)~\cite{Benson-2018-three};
(iii) Z-eigenvector centrality (based on
$Z$-eigenvectors of tensors)~\cite{Benson-2018-three}; and
(iv) H-eigenvector centrality (based on
$H$-eigenvectors of tensors)~\cite{Benson-2018-three}.
Second, we use notions of network core-periphery decompositions, where we expect
that nodes in $C$ will be identified as ``core'' in this sense. We use two
algorithms:
(i) the $k$-core decomposition~\cite{Seidman-1983-kcore} based on hypergraph
degree; and
(ii) Borgatti-Everett scores in the weighted clique graph.

All of these methods induce an ordering on the nodes. We induce an ordering on
the output of \cref{alg:UMHS} by degree and then order the remaining nodes
(those not in the output of \cref{alg:UMHS}) in order by degree.
With an ordering
on the nodes, we measure performance in terms of precision at core size,
i.e., the fraction of the first $\lvert C \rvert$ in the ordering that are in $C$,
as well as area under the precision-recall curve (AUPRC). We use AUPRC as opposed
to area under the ROC curve due to class imbalance~\cite{Davis-2006-curves}, namely, most
nodes are not in the core.

\Cref{tab:perf_stats} reports the results of all methods on all datasets.
In terms of precision at core size, UMHS out-performs the baselines by
wide margins on the Enron, Avocado, Math tags, and Ubuntu tags datasets
for all uniformities $r$. UMHS does well for the 3-regular W3C hypergraphs.
On DBLP and the other W3C hypergraphs, the simple degree heuristic
seems to perform well, although our algorithm still outperforms it on a large share of samples.
We see similar trends when measuring in terms of AUPRC, with UMHS dominating on the same set of datasets above; and in this case, UMHS is more competitive even for the datasets where it is weakest --- the 4-uniform and 5-uniform W3C
hypergraphs, as well as the DBLP hypergraphs.

Overall, the performance of many algorithms degrades as we increase the uniformity
of the hypergraph.  In general, this makes sense---the core nodes are increasingly
hidden in larger hyperedges. In the case of UMHS, we have a more specific interpretation of what this means.
As the hyperedges get larger, there may be cases of hyperedges
that contain just one node in the core $C$, and the rest in the fringe. However,
by the greedy structure of \cref{alg:greedy}, we would put all nodes in the hitting set.
Furthermore, the bounds in Corollaries \ref{cor:plantedintersection} and \ref{cor:overlap} also degrade
as we increase the uniformity $r$. Nevertheless, our proposed UMHS algorithm
still outperforms the baselines at an aggregate level.

\subsection{Recovery as a function of output size}

Our UMHS algorithm (\cref{alg:UMHS}) has a single tuning parameter, which is the
number of iterations $N$, i.e., the number of calls to the sub-routine for greedy
maximal matching (\cref{alg:greedy}).  Here we examine performance of UMHS
as a function of $N$. Specifically, we analyze (i)
the fraction of core nodes in the planted hitting set that are recovered and
(ii) the output size of \cref{alg:UMHS} as a function of the number of
iterations (\cref{fig:size} shows results for the email datasets).

We highlight a couple of important findings. First, we only need around 50
iterations to achieve high recovery rates. Each iteration is fast, and the
entirety of the algorithm's running time takes at most a few minutes on the
larger datasets. Second, the union of minimal hitting sets size tends to
increase sharply with a few iterations and then levels off sharply. These results are
consistent with our theory. \Cref{thm:UMHS,cor:plantedintersection} both provide
theoretical justification for why the output should not grow too large.

\begin{figure}[tb]
\begin{center}
    \includegraphics[width=0.49\columnwidth]{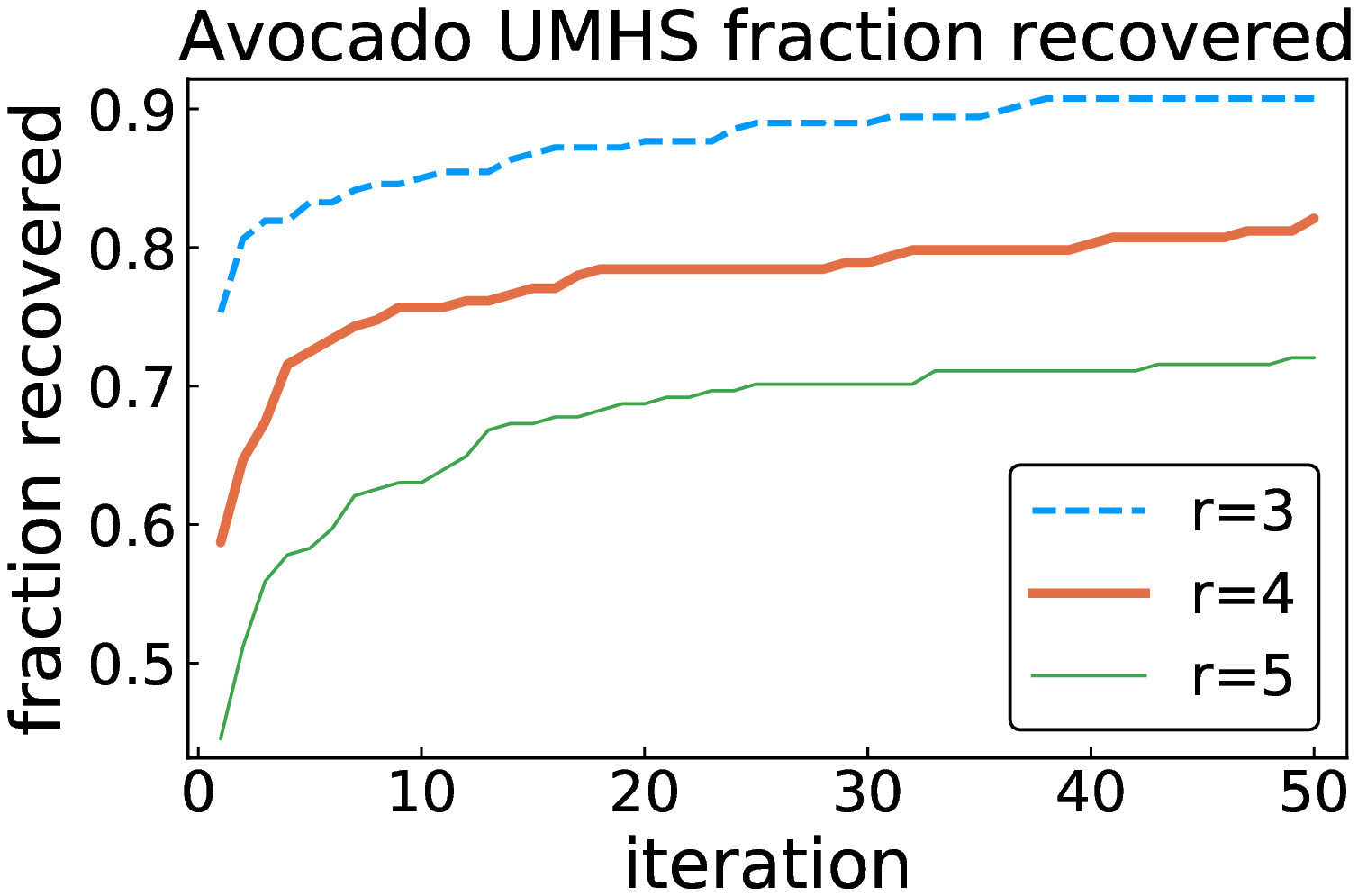}
    \includegraphics[width=0.49\columnwidth]{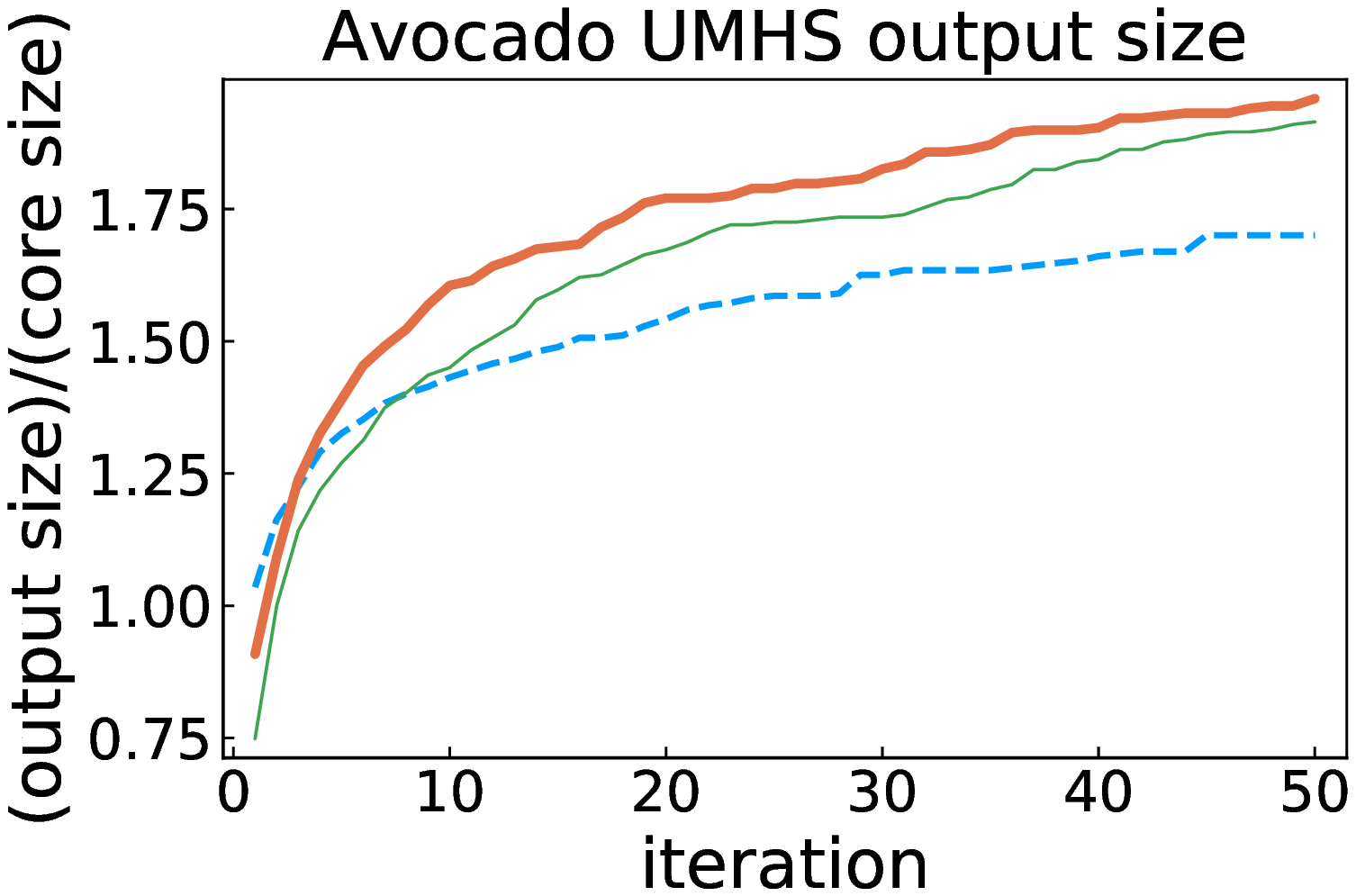} \\
    \includegraphics[width=0.49\columnwidth]{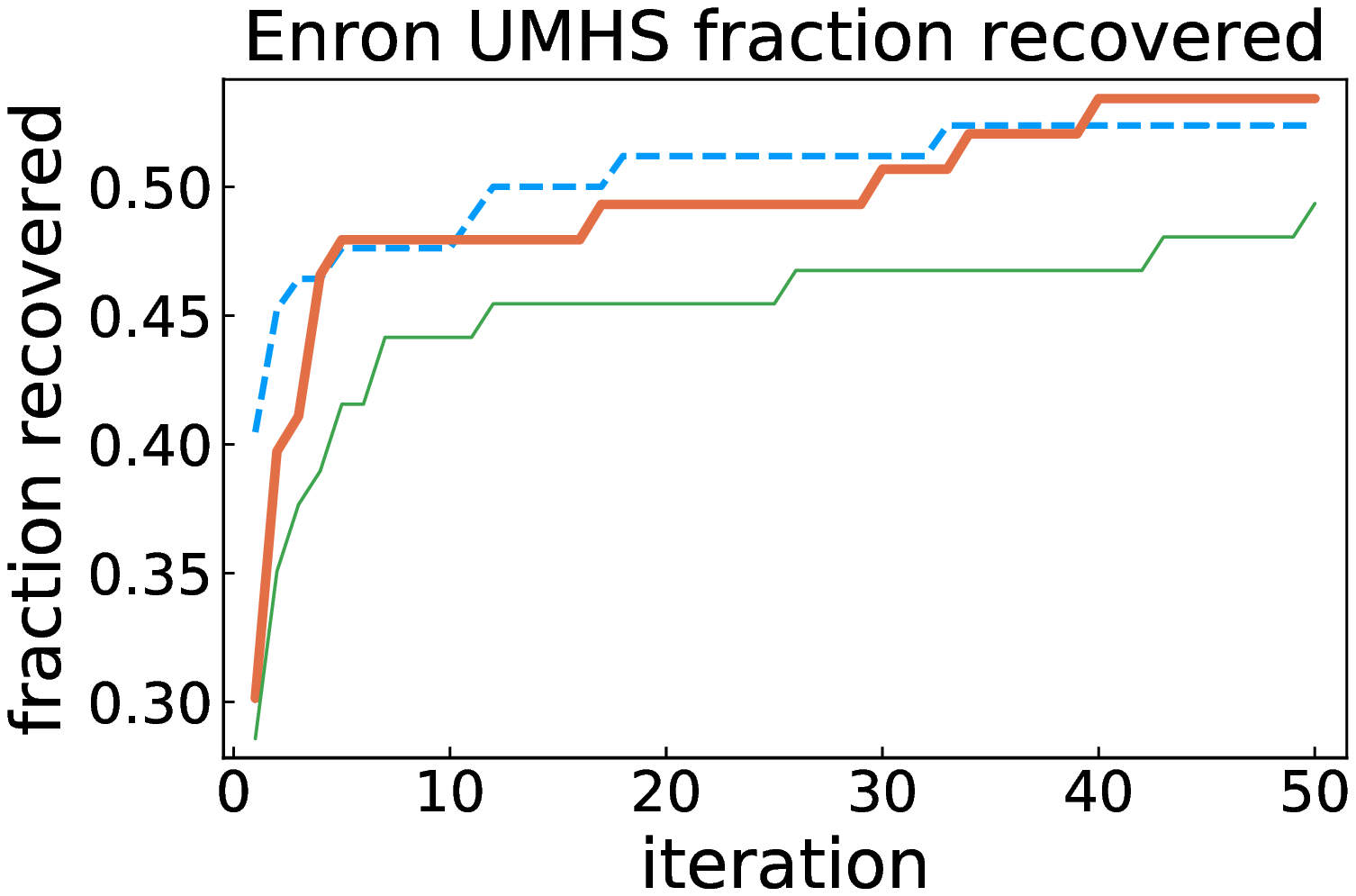} 
    \includegraphics[width=0.49\columnwidth]{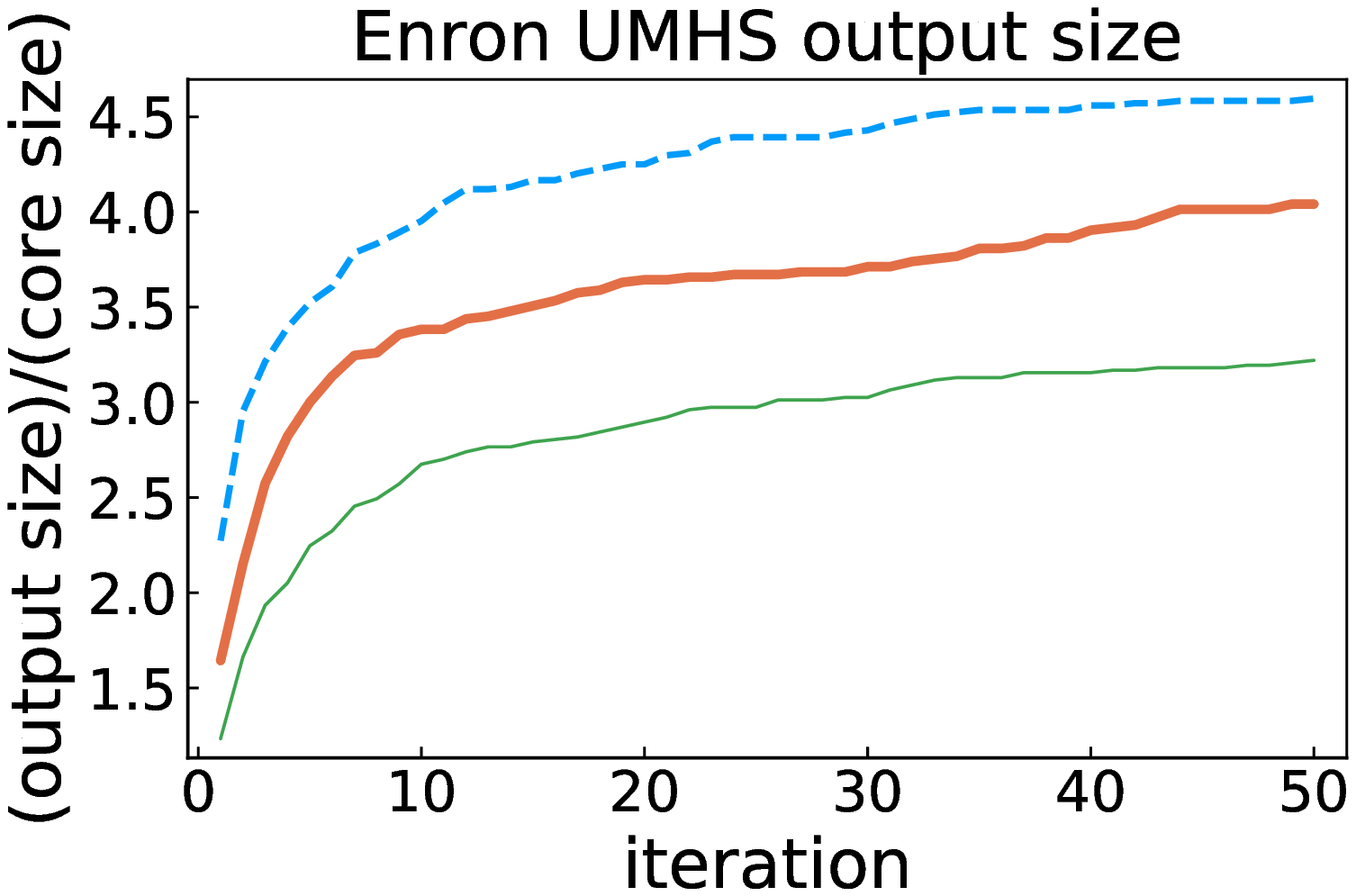} \\
    \includegraphics[width=0.49\columnwidth]{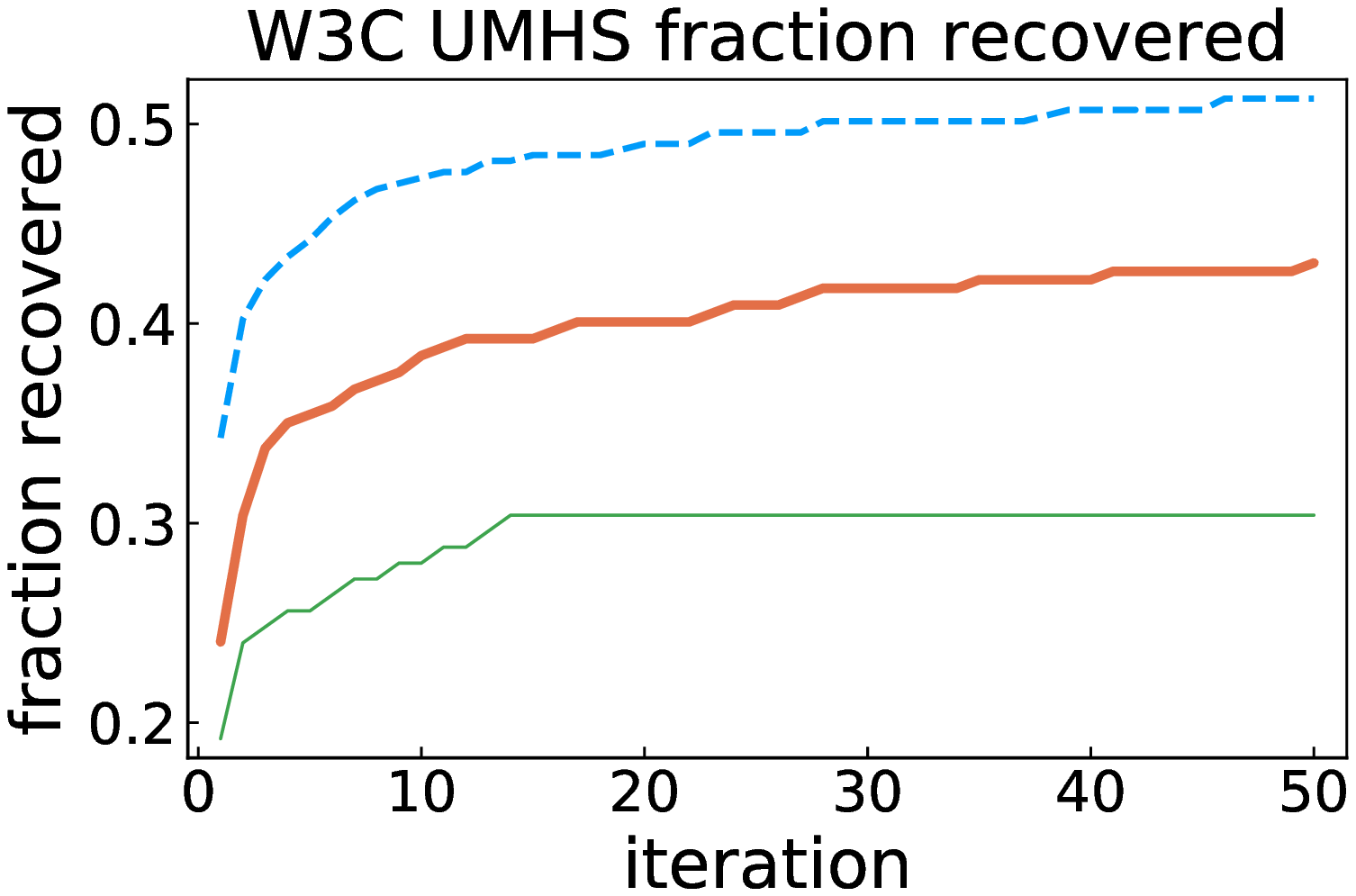}
    \includegraphics[width=0.49\columnwidth]{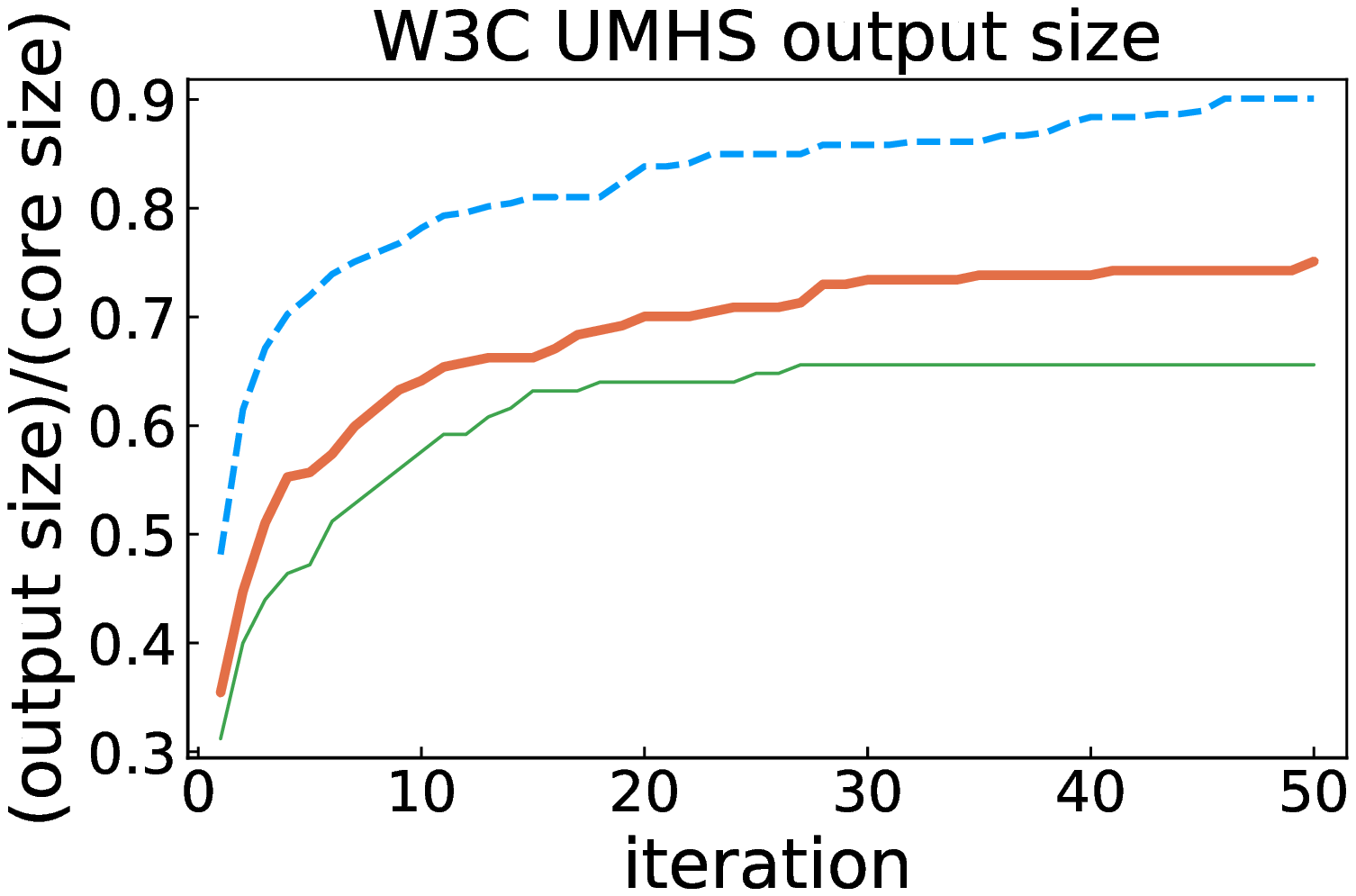}\\
\end{center}
\caption{%
Planted hitting set recovery improves with more iterations before leveling
(left column), and similarly, size increases with more iterations before leveling
(right column). Fewer than 50 iterations typically reaches peak performance in
our email datasets. The leveling of the output size is consistent with
\cref{thm:UMHS,cor:plantedintersection}, which say that our algorithm's output
can't grow too large.
}
\label{fig:size}
\end{figure}


\section{Related work}

On the theoretical side, our problem can be thought of as an
instance of a ``planted'' problem, where a certain
type of graph structure is planted or hidden in an a graph and
one must recover the latent structure given the graph. 
Well-studied problems in this space include the
planted clique, where one tries to find a clique placed  
in a sample from a $G_{n,1/2}$ graph~\cite{Alon-1998-cliques,Feige-2000-finding,Dekel-2011-hidden,Deshpande-2014};
and the planted partition or stochastic block model recovery, where a random graph is sampled
with probabilities dependent on latent labels of nodes, and the goal is to recover
these labels~\cite{Mossel-2014-belief,Tsourakakis2015,Peixoto-2017-nonparametric,Abbe-2018-community}.
These planted problems are based on some random way in which the graph was sampled.
In our case of planted hitting sets, the graph was deterministic, although we
could improve our results under a random hypergraph model.
Most related to our results is recent work in planted vertex covers; this is 
a special case of hitting sets for the case of graphs (which mathematically are the same as 2-uniform hypergraphs)~\cite{Benson-2018-planted}. As discussed above, the hypergraph model is
more realistic for many datasets (especially email), given its ability to represent groups of more than two individuals at a time.

Within the field of network science, the idea of a small planted hitting set 
fits with two related ideas: node centrality and core-periphery structure.
The former concept deals with finding important nodes (or ranking them) 
based on their connections, often provided as a graph~\cite{Bonacich-1987-power,Boldi-2014-axioms,Gleich-2015-PageRank}.
Nodes in hitting sets are central to hypergraphs almost by definition---every
hyperedge must contain at least one of these nodes. Thus, we expect them
to be ``central'' in some sense. However, we found that existing measures
of node centrality in hypergraphs did not recover planted hitting sets
at the same levels as our union of minimal hitting sets algorithm.

Core-periphery structure is a mesoscale property of many networks, where
there is a densely connected core set of nodes along with a loosely
connected periphery~\cite{Csermely-2013-CP}. 
Such a composition has been studied in sociology~\cite{Lorrain-1971-equivalence,Doreian-1985-equivalence} and 
international trade~\cite{Smith-1992-structure}, 
where the core-periphery structure is due to differential status. 
Now, core-periphery identification is a broader tool for identifying
structure in general networks~\cite{Holme-2005-CP,Rombach-2017-CP,Govindan-2017-kpeak,Sarkar-2018-richclubs}. The planted hitting set that we aim to recover corresponds to an extreme type of core-periphery structure;
due to the way in which we assume the hypergraph is measured, nodes 
on the periphery (the ``fringe nodes'') cannot be connected without
a core node as an intermediary in a hyperedge.

Finally, core-fringe structure itself has received some attention.
Romero et al.\ analyzed the behavior of a core group of employees
at a hedge fund in the context of their relationships with contacts outside
of the company~\cite{Romero-2016-stress}, and Benson and Kleinberg
analyzed how links between core and fringe nodes influence graph-based link prediction
algorithms~\cite{Benson-2018-core}. Our research highlights additional
richness to the problem when the underlying data model is a hypergraph.


\section{Discussion}

Network data is a partial view of a larger
system~\cite{Laumann-1989-boundary}. A common case is when the interactions of
some specified set of actors or nodes are under surveillance. This provides a
``core-fringe'' structure to the network---we can see all interactions involving
the core but only the interactions of the fringe with the core. When data is
leaked or metadata is lost over time due to data provenance issues, we would
like to be able to recover these core and fringe labels for security or data
maintenance purposes.

Here, we have studied this problem where the network data is a
hypergraph, so the core is a hitting set. This setting is common in email data or
situations in which groups of people are meeting. We used co-authorship as a
proxy for the latter situation, but one can imagine situations in which one
records the groups of attendees of meetings involving someone under
surveillance.
Theoretically, we showed that the union of minimal hitting sets cannot be too
large and that the output of the well-known approximation algorithm for minimum
hitting sets has to somehow overlap the core.

Using these results as
motivation, we developed an extremely simple algorithm for recovering the core:
take the union of minimal hitting sets that are output by randomly
initialized instances of the approximation algorithm. This method out-performed
several strong baselines, including methods based on centrality and
core-periphery structure.

However, our simple algorithm opens several
avenues for improvement in future work. For instance, our model assumed an
undirected and unweighted hypergraph structure. There are models of directed and
weighted hypergraphs~\cite{Gallo-1993-directed-hgraph} that could be used to
improve the recovery algorithm. In addition, theory on the number of calls to
the approximation algorithm subroutine would be useful. In practice, only a few
calls is sufficient and perhaps assuming particular structure on the hypergraph
could yield additional theoretical insight.

\bibliographystyle{ACM-Reference-Format}

\begin{thebibliography}{53}


\ifx \showCODEN    \undefined \def \showCODEN     #1{\unskip}     \fi
\ifx \showDOI      \undefined \def \showDOI       #1{#1}\fi
\ifx \showISBNx    \undefined \def \showISBNx     #1{\unskip}     \fi
\ifx \showISBNxiii \undefined \def \showISBNxiii  #1{\unskip}     \fi
\ifx \showISSN     \undefined \def \showISSN      #1{\unskip}     \fi
\ifx \showLCCN     \undefined \def \showLCCN      #1{\unskip}     \fi
\ifx \shownote     \undefined \def \shownote      #1{#1}          \fi
\ifx \showarticletitle \undefined \def \showarticletitle #1{#1}   \fi
\ifx \showURL      \undefined \def \showURL       {\relax}        \fi
\providecommand\bibfield[2]{#2}
\providecommand\bibinfo[2]{#2}
\providecommand\natexlab[1]{#1}
\providecommand\showeprint[2][]{arXiv:#2}

\bibitem[\protect\citeauthoryear{Abbe}{Abbe}{2018}]%
        {Abbe-2018-community}
\bibfield{author}{\bibinfo{person}{Emmanuel Abbe}.}
  \bibinfo{year}{2018}\natexlab{}.
\newblock \showarticletitle{Community Detection and Stochastic Block Models:
  Recent Developments}.
\newblock \bibinfo{journal}{\emph{JMLR}} (\bibinfo{year}{2018}).
\newblock


\bibitem[\protect\citeauthoryear{Alon, Krivelevich, and Sudakov}{Alon
  et~al\mbox{.}}{1998}]%
        {Alon-1998-cliques}
\bibfield{author}{\bibinfo{person}{Noga Alon}, \bibinfo{person}{Michael
  Krivelevich}, {and} \bibinfo{person}{Benny Sudakov}.}
  \bibinfo{year}{1998}\natexlab{}.
\newblock \showarticletitle{Finding a Large Hidden Clique in a Random Graph}.
  In \bibinfo{booktitle}{\emph{SODA}}.
\newblock


\bibitem[\protect\citeauthoryear{Benson}{Benson}{2018}]%
        {Benson-2018-three}
\bibfield{author}{\bibinfo{person}{Austin~R Benson}.}
  \bibinfo{year}{2018}\natexlab{}.
\newblock \showarticletitle{Three hypergraph eigenvector centralities}.
\newblock \bibinfo{journal}{\emph{arXiv:1807.09644}} (\bibinfo{year}{2018}).
\newblock


\bibitem[\protect\citeauthoryear{Benson, Abebe, Schaub, Jadbabaie, and
  Kleinberg}{Benson et~al\mbox{.}}{2018}]%
        {Benson-2018-simplicial}
\bibfield{author}{\bibinfo{person}{Austin~R. Benson}, \bibinfo{person}{Rediet
  Abebe}, \bibinfo{person}{Michael~T. Schaub}, \bibinfo{person}{Ali Jadbabaie},
  {and} \bibinfo{person}{Jon Kleinberg}.} \bibinfo{year}{2018}\natexlab{}.
\newblock \showarticletitle{Simplicial closure and higher-order link
  prediction}.
\newblock \bibinfo{journal}{\emph{PNAS}} (\bibinfo{year}{2018}).
\newblock


\bibitem[\protect\citeauthoryear{Benson and Kleinberg}{Benson and
  Kleinberg}{2018a}]%
        {Benson-2018-core}
\bibfield{author}{\bibinfo{person}{Austin~R Benson} {and} \bibinfo{person}{Jon
  Kleinberg}.} \bibinfo{year}{2018}\natexlab{a}.
\newblock \showarticletitle{Core-fringe link prediction}.
\newblock \bibinfo{journal}{\emph{arXiv:1811.11540}} (\bibinfo{year}{2018}).
\newblock


\bibitem[\protect\citeauthoryear{Benson and Kleinberg}{Benson and
  Kleinberg}{2018b}]%
        {Benson-2018-planted}
\bibfield{author}{\bibinfo{person}{Austin~R. Benson} {and} \bibinfo{person}{Jon
  Kleinberg}.} \bibinfo{year}{2018}\natexlab{b}.
\newblock \showarticletitle{Found Graph Data and Planted Vertex Covers}. In
  \bibinfo{booktitle}{\emph{NeurIPS}}.
\newblock


\bibitem[\protect\citeauthoryear{Boldi, Codenotti, Santini, and Vigna}{Boldi
  et~al\mbox{.}}{2004}]%
        {Boldi-2004-UbiCrawler}
\bibfield{author}{\bibinfo{person}{Paolo Boldi}, \bibinfo{person}{Bruno
  Codenotti}, \bibinfo{person}{Massimo Santini}, {and}
  \bibinfo{person}{Sebastiano Vigna}.} \bibinfo{year}{2004}\natexlab{}.
\newblock \showarticletitle{{UbiCrawler}: a scalable fully distributed Web
  crawler}.
\newblock \bibinfo{journal}{\emph{Software: Practice and Experience}}
  (\bibinfo{year}{2004}).
\newblock


\bibitem[\protect\citeauthoryear{Boldi, Marino, Santini, and Vigna}{Boldi
  et~al\mbox{.}}{2014}]%
        {Boldi-2014-BUbiNG}
\bibfield{author}{\bibinfo{person}{Paolo Boldi}, \bibinfo{person}{Andrea
  Marino}, \bibinfo{person}{Massimo Santini}, {and} \bibinfo{person}{Sebastiano
  Vigna}.} \bibinfo{year}{2014}\natexlab{}.
\newblock \showarticletitle{{BUbiNG}: Massive Crawling for the Masses}. In
  \bibinfo{booktitle}{\emph{WWW}}.
\newblock


\bibitem[\protect\citeauthoryear{Boldi and Vigna}{Boldi and Vigna}{2014}]%
        {Boldi-2014-axioms}
\bibfield{author}{\bibinfo{person}{Paolo Boldi} {and}
  \bibinfo{person}{Sebastiano Vigna}.} \bibinfo{year}{2014}\natexlab{}.
\newblock \showarticletitle{Axioms for Centrality}.
\newblock \bibinfo{journal}{\emph{Internet Mathematics}}
  (\bibinfo{year}{2014}).
\newblock


\bibitem[\protect\citeauthoryear{Bollob{\'a}s}{Bollob{\'a}s}{1986}]%
        {Bollobas-1986-combinatorics}
\bibfield{author}{\bibinfo{person}{B{\'e}la Bollob{\'a}s}.}
  \bibinfo{year}{1986}\natexlab{}.
\newblock \bibinfo{booktitle}{\emph{Combinatorics: set systems, hypergraphs,
  families of vectors, and combinatorial probability}}.
\newblock \bibinfo{publisher}{Cambridge University Press}.
\newblock


\bibitem[\protect\citeauthoryear{Bonacich}{Bonacich}{1987}]%
        {Bonacich-1987-power}
\bibfield{author}{\bibinfo{person}{Phillip Bonacich}.}
  \bibinfo{year}{1987}\natexlab{}.
\newblock \showarticletitle{Power and Centrality: A Family of Measures}.
\newblock \bibinfo{journal}{\emph{American J.\ of Sociology}}
  (\bibinfo{year}{1987}).
\newblock


\bibitem[\protect\citeauthoryear{Borgatti and Everett}{Borgatti and
  Everett}{2000}]%
        {Borgatti-2000-CP}
\bibfield{author}{\bibinfo{person}{Stephen~P Borgatti} {and}
  \bibinfo{person}{Martin~G Everett}.} \bibinfo{year}{2000}\natexlab{}.
\newblock \showarticletitle{Models of core/periphery structures}.
\newblock \bibinfo{journal}{\emph{Social Networks}} (\bibinfo{year}{2000}).
\newblock


\bibitem[\protect\citeauthoryear{Buneman, Khanna, and Wang-Chiew}{Buneman
  et~al\mbox{.}}{2001}]%
        {Buneman-2001-provenance}
\bibfield{author}{\bibinfo{person}{Peter Buneman}, \bibinfo{person}{Sanjeev
  Khanna}, {and} \bibinfo{person}{Tan Wang-Chiew}.}
  \bibinfo{year}{2001}\natexlab{}.
\newblock \showarticletitle{Why and Where: A Characterization of Data
  Provenance}.
\newblock In \bibinfo{booktitle}{\emph{ICDT}}.
\newblock


\bibitem[\protect\citeauthoryear{Craswell, de~Vries, and Soboroff}{Craswell
  et~al\mbox{.}}{2005}]%
        {Craswell-2005-TREC}
\bibfield{author}{\bibinfo{person}{Nick Craswell}, \bibinfo{person}{Arjen~P de
  Vries}, {and} \bibinfo{person}{Ian Soboroff}.}
  \bibinfo{year}{2005}\natexlab{}.
\newblock \showarticletitle{Overview of the TREC 2005 Enterprise Track}. In
  \bibinfo{booktitle}{\emph{TREC}}, Vol.~\bibinfo{volume}{5}.
  \bibinfo{pages}{199--205}.
\newblock


\bibitem[\protect\citeauthoryear{Csermely, London, Wu, and Uzzi}{Csermely
  et~al\mbox{.}}{2013}]%
        {Csermely-2013-CP}
\bibfield{author}{\bibinfo{person}{P. Csermely}, \bibinfo{person}{A. London},
  \bibinfo{person}{L.-Y. Wu}, {and} \bibinfo{person}{B. Uzzi}.}
  \bibinfo{year}{2013}\natexlab{}.
\newblock \showarticletitle{Structure and dynamics of core/periphery networks}.
\newblock \bibinfo{journal}{\emph{J.\ of Complex Networks}}
  (\bibinfo{year}{2013}).
\newblock


\bibitem[\protect\citeauthoryear{Damaschke}{Damaschke}{2006}]%
        {Damaschke-2006-parameterized}
\bibfield{author}{\bibinfo{person}{Peter Damaschke}.}
  \bibinfo{year}{2006}\natexlab{}.
\newblock \showarticletitle{Parameterized enumeration, transversals, and
  imperfect phylogeny reconstruction}.
\newblock \bibinfo{journal}{\emph{Theoretical Computer Science}}
  (\bibinfo{year}{2006}).
\newblock


\bibitem[\protect\citeauthoryear{Damaschke and Molokov}{Damaschke and
  Molokov}{2009}]%
        {Damaschke-2009-unions}
\bibfield{author}{\bibinfo{person}{Peter Damaschke} {and}
  \bibinfo{person}{Leonid Molokov}.} \bibinfo{year}{2009}\natexlab{}.
\newblock \showarticletitle{The union of minimal hitting sets: Parameterized
  combinatorial bounds and counting}.
\newblock \bibinfo{journal}{\emph{J.\ of Discrete Algorithms}}
  (\bibinfo{year}{2009}).
\newblock


\bibitem[\protect\citeauthoryear{Davis and Goadrich}{Davis and
  Goadrich}{2006}]%
        {Davis-2006-curves}
\bibfield{author}{\bibinfo{person}{Jesse Davis} {and} \bibinfo{person}{Mark
  Goadrich}.} \bibinfo{year}{2006}\natexlab{}.
\newblock \showarticletitle{The relationship between Precision-Recall and {ROC}
  curves}. In \bibinfo{booktitle}{\emph{ICML}}.
\newblock


\bibitem[\protect\citeauthoryear{Dekel, Gurel-Gurevich, and Peres}{Dekel
  et~al\mbox{.}}{2011}]%
        {Dekel-2011-hidden}
\bibfield{author}{\bibinfo{person}{Yael Dekel}, \bibinfo{person}{Ori
  Gurel-Gurevich}, {and} \bibinfo{person}{Yuval Peres}.}
  \bibinfo{year}{2011}\natexlab{}.
\newblock \showarticletitle{Finding Hidden Cliques in Linear Time with High
  Probability}. In \bibinfo{booktitle}{\emph{ANALCO}}.
\newblock


\bibitem[\protect\citeauthoryear{Deshpande and Montanari}{Deshpande and
  Montanari}{2014}]%
        {Deshpande-2014}
\bibfield{author}{\bibinfo{person}{Yash Deshpande} {and}
  \bibinfo{person}{Andrea Montanari}.} \bibinfo{year}{2014}\natexlab{}.
\newblock \showarticletitle{Finding Hidden Cliques of Size $\sqrt{N/e}$ in
  Nearly Linear Time}.
\newblock \bibinfo{journal}{\emph{Foundations of Computational Mathematics}}
  (\bibinfo{year}{2014}).
\newblock
\urldef\tempurl%
\url{https://doi.org/10.1007/s10208-014-9215-y}
\showDOI{\tempurl}


\bibitem[\protect\citeauthoryear{Doreian}{Doreian}{1985}]%
        {Doreian-1985-equivalence}
\bibfield{author}{\bibinfo{person}{Patrick Doreian}.}
  \bibinfo{year}{1985}\natexlab{}.
\newblock \showarticletitle{Structural equivalence in a psychology journal
  network}.
\newblock \bibinfo{journal}{\emph{J.\ of the American Society for Information
  Science}} (\bibinfo{year}{1985}).
\newblock


\bibitem[\protect\citeauthoryear{Erd{\"o}s and Rado}{Erd{\"o}s and
  Rado}{1960}]%
        {Erdos-1960-intersection}
\bibfield{author}{\bibinfo{person}{Paul Erd{\"o}s} {and}
  \bibinfo{person}{Richard Rado}.} \bibinfo{year}{1960}\natexlab{}.
\newblock \showarticletitle{Intersection theorems for systems of sets}.
\newblock \bibinfo{journal}{\emph{J.\ of the London Mathematical Society}}
  (\bibinfo{year}{1960}).
\newblock


\bibitem[\protect\citeauthoryear{Feige and Krauthgamer}{Feige and
  Krauthgamer}{2000}]%
        {Feige-2000-finding}
\bibfield{author}{\bibinfo{person}{Uriel Feige} {and} \bibinfo{person}{Robert
  Krauthgamer}.} \bibinfo{year}{2000}\natexlab{}.
\newblock \showarticletitle{Finding and certifying a large hidden clique in a
  semirandom graph}.
\newblock \bibinfo{journal}{\emph{Random Structures and Algorithms}}
  (\bibinfo{year}{2000}).
\newblock


\bibitem[\protect\citeauthoryear{Gallo, Longo, Pallottino, and Nguyen}{Gallo
  et~al\mbox{.}}{1993}]%
        {Gallo-1993-directed-hgraph}
\bibfield{author}{\bibinfo{person}{Giorgio Gallo}, \bibinfo{person}{Giustino
  Longo}, \bibinfo{person}{Stefano Pallottino}, {and} \bibinfo{person}{Sang
  Nguyen}.} \bibinfo{year}{1993}\natexlab{}.
\newblock \showarticletitle{Directed hypergraphs and applications}.
\newblock \bibinfo{journal}{\emph{Discrete Applied Mathematics}}
  (\bibinfo{year}{1993}).
\newblock


\bibitem[\protect\citeauthoryear{Ghoshdastidar and Dukkipati}{Ghoshdastidar and
  Dukkipati}{2014}]%
        {Debarghya-2014-Consistency}
\bibfield{author}{\bibinfo{person}{Debarghya Ghoshdastidar} {and}
  \bibinfo{person}{Ambedkar Dukkipati}.} \bibinfo{year}{2014}\natexlab{}.
\newblock \showarticletitle{Consistency of spectral partitioning of uniform
  hypergraphs under planted partition model}. In
  \bibinfo{booktitle}{\emph{NeurIPS}}.
\newblock


\bibitem[\protect\citeauthoryear{Gile and Handcock}{Gile and Handcock}{2010}]%
        {Gile-2010-RDS}
\bibfield{author}{\bibinfo{person}{Krista~J. Gile} {and}
  \bibinfo{person}{Mark~S. Handcock}.} \bibinfo{year}{2010}\natexlab{}.
\newblock \showarticletitle{Respondent-Driven Sampling: An Assessment of
  Current Methodology}.
\newblock \bibinfo{journal}{\emph{Sociological Methodology}}
  (\bibinfo{year}{2010}).
\newblock


\bibitem[\protect\citeauthoryear{Gleich}{Gleich}{2015}]%
        {Gleich-2015-PageRank}
\bibfield{author}{\bibinfo{person}{David~F. Gleich}.}
  \bibinfo{year}{2015}\natexlab{}.
\newblock \showarticletitle{{PageRank} Beyond the Web}.
\newblock \bibinfo{journal}{\emph{SIAM Rev.}} (\bibinfo{year}{2015}).
\newblock


\bibitem[\protect\citeauthoryear{Goel and Salganik}{Goel and Salganik}{2010}]%
        {Goel-2010-RDS}
\bibfield{author}{\bibinfo{person}{S. Goel} {and} \bibinfo{person}{M.~J.
  Salganik}.} \bibinfo{year}{2010}\natexlab{}.
\newblock \showarticletitle{Assessing respondent-driven sampling}.
\newblock \bibinfo{journal}{\emph{PNAS}} (\bibinfo{year}{2010}).
\newblock


\bibitem[\protect\citeauthoryear{Govindan, Wang, Xu, Duan, and
  Soundarajan}{Govindan et~al\mbox{.}}{2017}]%
        {Govindan-2017-kpeak}
\bibfield{author}{\bibinfo{person}{Priya Govindan}, \bibinfo{person}{Chenghong
  Wang}, \bibinfo{person}{Chumeng Xu}, \bibinfo{person}{Hongyu Duan}, {and}
  \bibinfo{person}{Sucheta Soundarajan}.} \bibinfo{year}{2017}\natexlab{}.
\newblock \showarticletitle{The k-peak Decomposition}. In
  \bibinfo{booktitle}{\emph{WWW}}.
\newblock


\bibitem[\protect\citeauthoryear{Guimer{\`{a}} and Sales-Pardo}{Guimer{\`{a}}
  and Sales-Pardo}{2009}]%
        {Guimer-2009-missing}
\bibfield{author}{\bibinfo{person}{Roger Guimer{\`{a}}} {and}
  \bibinfo{person}{Marta Sales-Pardo}.} \bibinfo{year}{2009}\natexlab{}.
\newblock \showarticletitle{Missing and spurious interactions and the
  reconstruction of complex networks}.
\newblock \bibinfo{journal}{\emph{PNAS}} (\bibinfo{year}{2009}).
\newblock


\bibitem[\protect\citeauthoryear{Halld{\'{o}}rsson and
  Losievskaja}{Halld{\'{o}}rsson and Losievskaja}{2009}]%
        {Halldrsson-2009-independent}
\bibfield{author}{\bibinfo{person}{Magn{\'{u}}s~M. Halld{\'{o}}rsson} {and}
  \bibinfo{person}{Elena Losievskaja}.} \bibinfo{year}{2009}\natexlab{}.
\newblock \showarticletitle{Independent sets in bounded-degree hypergraphs}.
\newblock \bibinfo{journal}{\emph{Discrete Applied Mathematics}}
  (\bibinfo{year}{2009}).
\newblock


\bibitem[\protect\citeauthoryear{Heckathorn}{Heckathorn}{2011}]%
        {Heckathorn-2011-snowball}
\bibfield{author}{\bibinfo{person}{Douglas~D. Heckathorn}.}
  \bibinfo{year}{2011}\natexlab{}.
\newblock \showarticletitle{Comment: Snowball versus Respondent-Driven
  Sampling}.
\newblock \bibinfo{journal}{\emph{Sociological Methodology}}
  (\bibinfo{year}{2011}).
\newblock


\bibitem[\protect\citeauthoryear{Hier and Greenberg}{Hier and
  Greenberg}{2009}]%
        {Hier-2009-surveillance}
\bibfield{author}{\bibinfo{person}{Sean~P. Hier} {and} \bibinfo{person}{Josh
  Greenberg}.} \bibinfo{year}{2009}\natexlab{}.
\newblock \bibinfo{booktitle}{\emph{Surveillance: Power, Problems, and
  Politics}}.
\newblock \bibinfo{publisher}{UBC Press}.
\newblock


\bibitem[\protect\citeauthoryear{Holme}{Holme}{2005}]%
        {Holme-2005-CP}
\bibfield{author}{\bibinfo{person}{Petter Holme}.}
  \bibinfo{year}{2005}\natexlab{}.
\newblock \showarticletitle{Core-periphery organization of complex networks}.
\newblock \bibinfo{journal}{\emph{PRE}} (\bibinfo{year}{2005}).
\newblock


\bibitem[\protect\citeauthoryear{Kapoor, Sharma, and Srivastava}{Kapoor
  et~al\mbox{.}}{2013}]%
        {Kapoor-2013-centrality}
\bibfield{author}{\bibinfo{person}{Komal Kapoor}, \bibinfo{person}{Dhruv
  Sharma}, {and} \bibinfo{person}{Jaideep Srivastava}.}
  \bibinfo{year}{2013}\natexlab{}.
\newblock \showarticletitle{Weighted node degree centrality for hypergraphs}.
  In \bibinfo{booktitle}{\emph{{IEEE} Network Science Workshop}}.
\newblock


\bibitem[\protect\citeauthoryear{Klimt and Yang}{Klimt and Yang}{2004}]%
        {Klimt-2004-Enron}
\bibfield{author}{\bibinfo{person}{Bryan Klimt} {and} \bibinfo{person}{Yiming
  Yang}.} \bibinfo{year}{2004}\natexlab{}.
\newblock \showarticletitle{{The Enron Corpus: A New Dataset for Email
  Classification Research}}.
\newblock In \bibinfo{booktitle}{\emph{Machine Learning: {ECML} 2004}}.
\newblock


\bibitem[\protect\citeauthoryear{Kossinets}{Kossinets}{2006}]%
        {Kossinets-2006-missing}
\bibfield{author}{\bibinfo{person}{Gueorgi Kossinets}.}
  \bibinfo{year}{2006}\natexlab{}.
\newblock \showarticletitle{Effects of missing data in social networks}.
\newblock \bibinfo{journal}{\emph{Social Networks}} (\bibinfo{year}{2006}).
\newblock


\bibitem[\protect\citeauthoryear{Kostochka, Mubayi, and Verstraëte}{Kostochka
  et~al\mbox{.}}{2014}]%
        {Kostochka-2014-independent}
\bibfield{author}{\bibinfo{person}{Alexandr Kostochka}, \bibinfo{person}{Dhruv
  Mubayi}, {and} \bibinfo{person}{Jacques Verstraëte}.}
  \bibinfo{year}{2014}\natexlab{}.
\newblock \showarticletitle{On independent sets in hypergraphs}.
\newblock \bibinfo{journal}{\emph{Random Structures \& Algorithms}}
  (\bibinfo{year}{2014}).
\newblock


\bibitem[\protect\citeauthoryear{Kuny}{Kuny}{1997}]%
        {Kuny-1997-dark}
\bibfield{author}{\bibinfo{person}{T Kuny}.} \bibinfo{year}{1997}\natexlab{}.
\newblock \showarticletitle{A digital dark ages? Challenges in the preservation
  of electronic information of electronic information}. In
  \bibinfo{booktitle}{\emph{IFLA Council and General Conference}}.
\newblock


\bibitem[\protect\citeauthoryear{Laumann, Marsden, and Prensky}{Laumann
  et~al\mbox{.}}{1989}]%
        {Laumann-1989-boundary}
\bibfield{author}{\bibinfo{person}{Edward~O Laumann}, \bibinfo{person}{Peter~V
  Marsden}, {and} \bibinfo{person}{David Prensky}.}
  \bibinfo{year}{1989}\natexlab{}.
\newblock \showarticletitle{The boundary specification problem in network
  analysis}.
\newblock \bibinfo{journal}{\emph{Research methods in social network analysis}}
  (\bibinfo{year}{1989}).
\newblock


\bibitem[\protect\citeauthoryear{Lorrain and White}{Lorrain and White}{1971}]%
        {Lorrain-1971-equivalence}
\bibfield{author}{\bibinfo{person}{Fran{\c{c}}ois Lorrain} {and}
  \bibinfo{person}{Harrison~C. White}.} \bibinfo{year}{1971}\natexlab{}.
\newblock \showarticletitle{Structural equivalence of individuals in social
  networks}.
\newblock \bibinfo{journal}{\emph{The J.\ of Mathematical Sociology}}
  (\bibinfo{year}{1971}).
\newblock


\bibitem[\protect\citeauthoryear{Lynch}{Lynch}{2008}]%
        {Lynch-2008-data}
\bibfield{author}{\bibinfo{person}{Clifford Lynch}.}
  \bibinfo{year}{2008}\natexlab{}.
\newblock \showarticletitle{How do your data grow?}
\newblock \bibinfo{journal}{\emph{Nature}} (\bibinfo{year}{2008}).
\newblock


\bibitem[\protect\citeauthoryear{Mossel, Neeman, and Sly}{Mossel
  et~al\mbox{.}}{2014}]%
        {Mossel-2014-belief}
\bibfield{author}{\bibinfo{person}{Elchanan Mossel}, \bibinfo{person}{Joe
  Neeman}, {and} \bibinfo{person}{Allan Sly}.} \bibinfo{year}{2014}\natexlab{}.
\newblock \showarticletitle{Belief propagation, robust reconstruction and
  optimal recovery of block models}. In \bibinfo{booktitle}{\emph{COLT}}.
\newblock


\bibitem[\protect\citeauthoryear{Peixoto}{Peixoto}{2017}]%
        {Peixoto-2017-nonparametric}
\bibfield{author}{\bibinfo{person}{Tiago~P. Peixoto}.}
  \bibinfo{year}{2017}\natexlab{}.
\newblock \showarticletitle{Nonparametric Bayesian inference of the
  microcanonical stochastic block model}.
\newblock \bibinfo{journal}{\emph{PRE}} (\bibinfo{year}{2017}).
\newblock


\bibitem[\protect\citeauthoryear{Rombach, Porter, Fowler, and Mucha}{Rombach
  et~al\mbox{.}}{2017}]%
        {Rombach-2017-CP}
\bibfield{author}{\bibinfo{person}{Puck Rombach}, \bibinfo{person}{Mason~A.
  Porter}, \bibinfo{person}{James~H. Fowler}, {and} \bibinfo{person}{Peter~J.
  Mucha}.} \bibinfo{year}{2017}\natexlab{}.
\newblock \showarticletitle{Core-Periphery Structure in Networks (Revisited)}.
\newblock \bibinfo{journal}{\emph{SIAM Rev.}} (\bibinfo{year}{2017}).
\newblock


\bibitem[\protect\citeauthoryear{Romero, Uzzi, and Kleinberg}{Romero
  et~al\mbox{.}}{2016}]%
        {Romero-2016-stress}
\bibfield{author}{\bibinfo{person}{Daniel~M. Romero}, \bibinfo{person}{Brian
  Uzzi}, {and} \bibinfo{person}{Jon Kleinberg}.}
  \bibinfo{year}{2016}\natexlab{}.
\newblock \showarticletitle{Social Networks Under Stress}. In
  \bibinfo{booktitle}{\emph{WWW}}.
\newblock


\bibitem[\protect\citeauthoryear{Sarkar, Bhowmick, and Mukherjee}{Sarkar
  et~al\mbox{.}}{2018}]%
        {Sarkar-2018-richclubs}
\bibfield{author}{\bibinfo{person}{Soumya Sarkar}, \bibinfo{person}{Sanjukta
  Bhowmick}, {and} \bibinfo{person}{Animesh Mukherjee}.}
  \bibinfo{year}{2018}\natexlab{}.
\newblock \showarticletitle{On Rich Clubs of Path-Based Centralities in
  Networks}. In \bibinfo{booktitle}{\emph{CIKM}}.
\newblock


\bibitem[\protect\citeauthoryear{Seidman}{Seidman}{1983}]%
        {Seidman-1983-kcore}
\bibfield{author}{\bibinfo{person}{Stephen~B. Seidman}.}
  \bibinfo{year}{1983}\natexlab{}.
\newblock \showarticletitle{Network structure and minimum degree}.
\newblock \bibinfo{journal}{\emph{Social Networks}} (\bibinfo{year}{1983}).
\newblock


\bibitem[\protect\citeauthoryear{Simmhan, Plale, and Gannon}{Simmhan
  et~al\mbox{.}}{2005}]%
        {Simmhan-2005-survey}
\bibfield{author}{\bibinfo{person}{Yogesh~L. Simmhan}, \bibinfo{person}{Beth
  Plale}, {and} \bibinfo{person}{Dennis Gannon}.}
  \bibinfo{year}{2005}\natexlab{}.
\newblock \showarticletitle{A survey of data provenance in e-science}.
\newblock \bibinfo{journal}{\emph{{ACM} {SIGMOD} Record}}
  (\bibinfo{year}{2005}).
\newblock


\bibitem[\protect\citeauthoryear{Smith and White}{Smith and White}{1992}]%
        {Smith-1992-structure}
\bibfield{author}{\bibinfo{person}{D.~A. Smith} {and} \bibinfo{person}{D.~R.
  White}.} \bibinfo{year}{1992}\natexlab{}.
\newblock \showarticletitle{Structure and Dynamics of the Global Economy:
  Network Analysis of International Trade 1965--1980}.
\newblock \bibinfo{journal}{\emph{Social Forces}} (\bibinfo{year}{1992}).
\newblock


\bibitem[\protect\citeauthoryear{Tan}{Tan}{2004}]%
        {Tan-2004-provenance}
\bibfield{author}{\bibinfo{person}{Wang-Chiew Tan}.}
  \bibinfo{year}{2004}\natexlab{}.
\newblock \showarticletitle{Research problems in data provenance}.
\newblock \bibinfo{journal}{\emph{IEEE Data Engineering Bulletin}}
  (\bibinfo{year}{2004}).
\newblock


\bibitem[\protect\citeauthoryear{Tsiatas, Saniee, Narayan, and Andrews}{Tsiatas
  et~al\mbox{.}}{2013}]%
        {Tsiatas-2013-spectral}
\bibfield{author}{\bibinfo{person}{Alexander Tsiatas}, \bibinfo{person}{Iraj
  Saniee}, \bibinfo{person}{Onuttom Narayan}, {and} \bibinfo{person}{Matthew
  Andrews}.} \bibinfo{year}{2013}\natexlab{}.
\newblock \showarticletitle{Spectral analysis of communication networks using
  Dirichlet eigenvalues}. In \bibinfo{booktitle}{\emph{WWW}}.
\newblock


\bibitem[\protect\citeauthoryear{Tsourakakis}{Tsourakakis}{2015}]%
        {Tsourakakis2015}
\bibfield{author}{\bibinfo{person}{Charalampos Tsourakakis}.}
  \bibinfo{year}{2015}\natexlab{}.
\newblock \showarticletitle{Streaming Graph Partitioning in the Planted
  Partition Model}. In \bibinfo{booktitle}{\emph{COSN}}.
\newblock


\end{thebibliography}

\appendix

\clearpage

\end{document}